\newcommand{\FS}{\footnotesize}
\def\eqref#1{equation~\ref{#1}}
\def\1{\bm{1}}
\DeclareMathAlphabet{\mathsfit}{\encodingdefault}{\sfdefault}{m}{sl}
\SetMathAlphabet{\mathsfit}{bold}{\encodingdefault}{\sfdefault}{bx}{n}
\numberwithin{equation}{section}
\declaretheorem[name=Theorem,numberwithin=section]{theorem}
\declaretheorem[name=Lemma,sibling=theorem]{lemma}
\declaretheorem[name=Proposition,sibling=theorem]{proposition}
\declaretheorem[name=Corollary,sibling=theorem]{corollary}
\theoremstyle{remark}
\newtheorem*{remark*}{Remark}
\crefname{equation}{Equation}{Equations}
\crefname{theorem}{Theorem}{Theorems}
\crefname{lemma}{Lemma}{Lemmas}
\crefname{proposition}{Proposition}{Propositions}
\crefname{corollary}{Corollary}{Corollaries}
\crefname{definition}{Definition}{Definitions}
\crefname{property}{Property}{Properties}
\crefname{remark}{Remark}{Remarks}
\crefname{algorithm}{Algorithm}{Algorithms}
\crefname{section}{Section}{Sections}
\Crefname{section}{Section}{Sections}
\newcommand{\samethanks}[1][\value{footnote}]{\footnotemark[#1]}
\title{Taming Variability: Randomized and Bootstrapped Conformal Risk Control for LLMs}
\author{%
  Lingyou Pang\textsuperscript{1}\thanks{Correspondence: \texttt{lyopang@ucdavis.edu}} \quad
  Lei Huang\textsuperscript{1}\thanks{Co-second authors (equal contribution).} \quad
  Jianyu Lin\textsuperscript{2}\samethanks \quad
  Tianyu Wang\textsuperscript{2} \\
  Alexander Aue\textsuperscript{1}\thanks{These authors jointly supervised this work.} \quad
  Carey E.~Priebe\textsuperscript{2}\samethanks \\
  \textsuperscript{1}\textit{Department of Statistics, University of California, Davis}\\
  \textsuperscript{2}\textit{Department of Applied Mathematics and Statistics, Johns Hopkins University}
}
\date{} 
\begin{document}
\maketitle

\begin{abstract}

We transform the randomness of LLMs into precise assurances using an actuator at the API interface that applies a user-defined risk constraint in finite samples via Conformal Risk Control (CRC). This label-free and model-agnostic actuator manages ship/abstain/escalate actions based solely on a scalar score from opaque outputs.
We enhance CRC's computational efficiency and robustness through Batched Bootstrap CRC (BB‑CRC) and Randomized Batched Weighted‑Average CRC (RBWA‑CRC), reducing calibration calls and stabilizing thresholds while maintaining statistical validity.
Additionally, we present a semantic quantification method grounded in gram matrix geometry, resulting in interpretable signal and metric design.
Together these pieces deliver principled randomness control for LLM hallucination mitigation and LLM-as-judge reliability. Our framework is assessed using four datasets, demonstrating its efficacy in enhancing factual accuracy and measuring LLM-as-judge performance, yielding a simplified and computationally efficient control layer that converts variability into statistical validity.

\end{abstract}

\section{Introduction}
\label{sec:intro}

Recently developed large language models (LLMs) function as stochastic \emph{black boxes}, limiting common user access to logits or internal processes during deployment. Key issues include fabricated information, hallucination, prompt-injection vulnerabilities, attacks, and inconsistent evaluations when LLMs evaluate their own outputs, resulting in compromised outputs. These problems undermine large-scale reliability and safety due to a lack of explicit, analytical, and scalable uncertainty control ~\citep{info:doi/10.2196/54345,alizadeh2025simplepromptinjectionattacks,wang2025pigprivacyjailbreakattack,zheng2023judgingllmasajudgemtbenchchatbot,shi2025judgingjudgessystematicstudy,chen2024humansllmsjudgestudy}. A \emph{flexible, vendor-independent control framework} that accounts for computational context and converts variability into \emph{probabilistic guarantees} essential for practitioners is missing.

We present a \emph{Conformal Actuator (CA) framework}, a single, monotone gate that directs actions (ship, abstain, regenerate, escalate) via a scalar, label-free score derived from outputs of black-box models. Calibrated once via \emph{Conformal Risk Control (CRC)}, the CA enforces a pre-specified risk budget with finite-sample guarantees. At the API boundary, we create two efficient and analytically tractable calibrators—\textbf{Batched-Bootstrap CRC (BB-CRC)} and \textbf{Randomized Batched Weighted-Average CRC (RBWA-CRC)}—that preserve probabilistic validity in finite samples while reducing calls and smoothing calibration, improving statistical efficiency and robustness. Beyond gating, we add a \emph{quantification layer} that \emph{quantifies and bounds} any offline-flagged risk \emph{by a geometrically principled centered Gram-matrix metric}. Crucially, calibration folds ground-truth information into this \emph{cheap, API-only} Gram signal, so the deployed score remains label-free yet inherits statistical guarantees.

We focus on two deployment challenges most significantly impacted by randomness and uncertainty. \textbf{(i) Hallucination control.} Can we ship only answers that are likely factual, while bounding the ``acted‑while‑unfactual'' exposure? \textbf{(ii) LLM‑as‑Judge reliability.} In utilizing an LLM evaluator for reviewing outputs from other LLMs, to what extent can we rely on this mechanism given our foundational mistrust of LLM-generated responses? A runtime process that is label-free, operationally simple, and statistically sound is necessary in both instances.

We formalize the CA and its guarantees under CRC, present the compute-aware calibrators (\textbf{BB-CRC}, \textbf{RBWA-CRC}), and evaluate the full system on open-domain QA. Across benchmarks, CRC-calibrated routing achieves consistent \emph{factuality lifting} at the target budget, while the Gram-geometry score—cheap to compute at inference and label-free—delivers the most uniform gains. \textbf{Overall, we provide an end-to-end, verifiable deployment pipeline that is (i) label-free at inference, (ii) compute-aware in calibration, and (iii) statistically valid in finite samples—combining a efficient actuator with a geometry-based quantification layer for practical, auditable control.}

\section{Related Work}

\emph{Conformal prediction (CP)} turns arbitrary model scores into uncertainty sets with \emph{distribution‑free, finite‑sample} guarantees under minimal assumptions, spanning classical tutorials and modern variants such as split CP for regression and conformalized quantile regression \citep{vovk2005algorithmic,shafer2007tutorialconformalprediction,lei2017distributionfreepredictiveinferenceregression,romano2019conformalizedquantileregression}. Robustness beyond exchangeability and auditing under shift have been actively studied \citep{oliveira2024splitconformalpredictionnonexchangeable,prinster2022jawsauditingpredictiveuncertainty,tibshirani2020conformalpredictioncovariateshift}, while domain surveys and task‑specific adaptations cover NLP and LLM‑style outputs \citep{campos2024conformalpredictionnaturallanguage,quach2024conformallanguagemodeling,pmlr-v235-mohri24a}. \emph{Conformal Risk Control} (CRC) extend CP from coverage to \emph{expected‑loss control} for bounded, monotone losses, yielding data‑dependent thresholds with finite‑sample guarantees that transfer to deployment \citep{bates2021distributionfreeriskcontrollingpredictionsets,angelopoulos2022gentleintroductionconformalprediction,angelopoulos2025conformalriskcontrol}. Methodological extensions include cross‑validated calibration and anytime‑valid/sequential control \citep{cohen2024crossvalidationconformalriskcontrol,xu2024activeanytimevalidriskcontrolling}. In LLM pipelines, CRC has been used for tail‑risk alignment, property alignment, metric calibration, and multi‑objective routing/cascades directly at the API layer \citep{overman2024aligningmodelpropertiesconformal,overman2025conformalarbitrageriskcontrolledbalancing,chen2025conformaltailriskcontrol,gomes2025conformalriskcontrolframework}. \cite{yadkori2024mitigatingllmhallucinationsconformal} employs CRC to reduce hallucinations by implementing a conformal-abstention strategy with standard scoring and calibrator.

Without white‑box access, one can embed a small batch of outputs, form a Gram matrix \citep{scholkopf1998nonlinear}, and summarize \emph{consensus} or \emph{ uncertainty} via functionals; this connects to representation‑similarity (CKA) and recent semantic‑space uncertainty measures including semantic entropy, kernel language entropy, and log‑det (semantic volume) scores \citep{kornblith2019similarityneuralnetworkrepresentations,Farquhar2024,kossen2024semanticentropyprobesrobust,nikitin2024kernellanguageentropyfinegrained,li2025semanticvolumequantifyingdetecting}. Signals based on geometry assist in hallucination management, self-consistency, and consensus analysis.\citep{liu2023gevalnlgevaluationusing, manakul2023selfcheckgptzeroresourceblackboxhallucination, wang2025llmwebdynamicstracing}.

\section{Gram Geometry for Black‑Box LLM Responses: Semantic Sufficiency and Quantification}
\label{sec:gram-geometry}

We require a metric layer that is mathematically stable and depends only on black-box LLM outputs. Building on \emph{self-consistency}—which aggregates multiple reasoning paths to improve reliability~\citep{wang2023selfconsistencyimproveschainthought}—and the \emph{Semantic Volume} view that links uncertainty to embedding dispersion via log determinants~\citep{li2025semanticvolumequantifyingdetecting}, we work directly in Gram space. This yields permutation invariance, numerical stability, and an outputs-only interface. On this basis, we design a novel \emph{response-level} Gram metric that is cheap to compute and readily deployable for safety control, providing a label-free signal that integrates seamlessly with our conformal actuator.

Let $v_i=\psi(y_i)\in\mathbb{R}^d$ be unit-norm embeddings of $n$ i.i.d.\ responses in a small queue, stack $V\in\mathbb{R}^{n\times d}$, define $G:=VV^\top$, center with $H:=I_n-\tfrac{1}{n}\mathbf{1}\mathbf{1}^\top$, and write $\tilde G:=HGH$. This yields two deployable capabilities: \emph{(A) semantic sufficiency} for batch decisions via leading subspaces, and \emph{(B) per-item quantification} via a one-dimensional, intrinsic uncertainty score—both label-free online.

\textbf{Semantic sufficiency: decisions live in leading Gram subspaces.} We compare the leading rank‑$r$ projector of a test batch to class prototypes.
For each class $k$, average calibration Grams to a surrogate $S_k$ (eigengap $\gamma_k>0$),
extract its top‑$r$ projector $P_k$, and build prototype projectors $\hat P_k$ from held‑out data.
For a test batch form $\hat P$ from the top‑$r$ eigenvectors of $\tilde G$. Decide via the
\emph{spectral–overlap} rule
\begin{equation}
\label{eq:so-rule}
\hat k=\arg\max_k \ \langle \hat P,\hat P_k\rangle_F .
\end{equation}

If the centered Gram of the batch concentrates near its true class mean
($\|\tilde G-S_{k^\star}\|_{\mathrm{op}}\le \varepsilon_n$) and prototypes are separated, then $\hat P$ is close to $P_{k^\star}$
and the overlap rule~\eqref{eq:so-rule} returns the correct class with a positive margin.
Formally:

\begin{theorem}[Semantic sufficiency of Gram projectors]\label{thm:gram-sufficiency}
Let the true class be $k^\star$ and suppose $\|\tilde G - S_{k^\star}\|_{\mathrm{op}} \le \varepsilon_n$. Define the \emph{prototype separation} $\Delta_P := \min_{j\neq \ell}\|\hat P_j-\hat P_\ell\|_F$ and the \emph{prototype error} $\delta_{\mathrm{proto}} := \|\hat P_{k^\star}-P_{k^\star}\|_F$. If
\begin{equation}
\label{eq:concentration}
\frac{2\sqrt{r}}{\gamma_{k^\star}}\,\varepsilon_n \;+\; \delta_{\mathrm{proto}}
\;<\; \frac{1}{4}\,\Delta_P ,
\end{equation}
then the spectral–overlap rule~\eqref{eq:so-rule} selects $\hat k = k^\star$ with margin
\begin{equation}
\label{eq:margin-cond}
m \;:=\; \langle \hat P,\hat P_{k^\star}\rangle_F \;-\; \max_{j\ne k^\star}\langle \hat P,\hat P_j\rangle_F
\;\ge\; \frac{\Delta_P^2}{4} \;>\; 0 \, .
\end{equation}
\end{theorem}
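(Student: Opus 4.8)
The plan is to control the Frobenius distance $\|\hat P - P_{k^\star}\|_F$ via a spectral perturbation bound, then push this estimate through the overlap functional using the identity $\|\hat P - \hat P_j\|_F^2 = \operatorname{rank}(\hat P)+\operatorname{rank}(\hat P_j) - 2\langle \hat P,\hat P_j\rangle_F$, which converts "large separation in $\|\cdot\|_F$" into "large gap in $\langle\cdot,\cdot\rangle_F$". First I would invoke the Davis--Kahan $\sin\Theta$ theorem: since $\hat P$ and $P_{k^\star}$ are the top-$r$ spectral projectors of $\tilde G$ and $S_{k^\star}$ respectively, and $S_{k^\star}$ has eigengap $\gamma_{k^\star}>0$ between its $r$-th and $(r{+}1)$-th eigenvalues, the bound $\|\tilde G - S_{k^\star}\|_{\mathrm{op}}\le\varepsilon_n$ yields $\|\hat P - P_{k^\star}\|_F \le \tfrac{2\sqrt{r}}{\gamma_{k^\star}}\,\varepsilon_n$ (the $\sqrt r$ coming from converting the operator-norm $\sin\Theta$ bound on the at-most-$r$ nonzero principal angles to Frobenius norm). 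Combining with the triangle inequality and the definition of $\delta_{\mathrm{proto}}$ gives
\begin{equation}
\|\hat P - \hat P_{k^\star}\|_F \;\le\; \frac{2\sqrt r}{\gamma_{k^\star}}\,\varepsilon_n + \delta_{\mathrm{proto}} \;=:\; \rho \;<\; \tfrac14\Delta_P,
\end{equation}
where the last inequality is exactly hypothesis~\eqref{eq:concentration}.

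Next I would translate this into a statement about inner products. All projectors here have rank $r$, so for any $j$, $\langle \hat P,\hat P_j\rangle_F = r - \tfrac12\|\hat P-\hat P_j\|_F^2$. Hence maximizing $\langle\hat P,\hat P_j\rangle_F$ is the same as minimizing $\|\hat P - \hat P_j\|_F$, i.e. the spectral-overlap rule is a nearest-prototype rule in Frobenius distance. For the true class, $\|\hat P - \hat P_{k^\star}\|_F \le \rho$. For any competitor $j\ne k^\star$, the reverse triangle inequality gives $\|\hat P - \hat P_j\|_F \ge \|\hat P_{k^\star} - \hat P_j\|_F - \|\hat P - \hat P_{k^\star}\|_F \ge \Delta_P - \rho$. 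Since $\rho < \tfrac14\Delta_P$, we have $\Delta_P - \rho > \tfrac34\Delta_P > \rho$, so $\hat k = k^\star$. To get the quantitative margin, compute
\begin{equation}
m \;=\; \langle\hat P,\hat P_{k^\star}\rangle_F - \max_{j\ne k^\star}\langle\hat P,\hat P_j\rangle_F
\;=\; \tfrac12\Big(\min_{j\ne k^\star}\|\hat P-\hat P_j\|_F^2 - \|\hat P-\hat P_{k^\star}\|_F^2\Big)
\;\ge\; \tfrac12\big((\Delta_P-\rho)^2 - \rho^2\big) \;=\; \tfrac12\Delta_P(\Delta_P - 2\rho).
\end{equation}
Using $\rho < \tfrac14\Delta_P$ once more gives $\Delta_P - 2\rho > \tfrac12\Delta_P$, hence $m > \tfrac14\Delta_P^2$, which is the claimed bound~\eqref{eq:margin-cond}.

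The main obstacle is getting the Davis--Kahan constant right, which requires a little care on two fronts. First, one must confirm that the relevant eigengap is measured at the surrogate $S_{k^\star}$ (the fixed quantity with gap $\gamma_{k^\star}$), not at the noisy $\tilde G$; the standard $\sin\Theta$ formulations come in "gap at the reference" and "gap at the perturbed matrix" variants, and only the former is available to us without an extra Weyl-inequality step (which would cost a factor involving $\varepsilon_n/\gamma_{k^\star}$ and require $\varepsilon_n$ small enough that the perturbed gap stays positive — a condition already implied by~\eqref{eq:concentration}). Second, one must justify the $\sqrt r$ factor: Davis--Kahan naturally bounds $\|\sin\Theta\|_{\mathrm{op}}$, and since the principal angle matrix is $r\times r$ the conversion $\|\hat P - P_{k^\star}\|_F = \sqrt2\,\|\sin\Theta\|_F \le \sqrt{2r}\,\|\sin\Theta\|_{\mathrm{op}}$ introduces the dimensional factor; tracking the $\sqrt 2$ versus the $2$ in the stated constant $2\sqrt r/\gamma_{k^\star}$ is the kind of routine-but-error-prone bookkeeping that the final constant absorbs (the stated bound is the conservative version, which is all that is needed). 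Everything after the perturbation bound is elementary algebra with the projector inner-product identity.
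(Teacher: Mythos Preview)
Your proposal is correct and follows essentially the same route as the paper: Davis--Kahan to bound $\|\hat P - P_{k^\star}\|_F \le \tfrac{2\sqrt r}{\gamma_{k^\star}}\varepsilon_n$, triangle inequality to reach $\rho$, the rank-$r$ projector identity $\langle A,B\rangle_F = r - \tfrac12\|A-B\|_F^2$, and then the reverse triangle inequality plus $\rho<\tfrac14\Delta_P$ to obtain $m\ge \tfrac{\Delta_P}{2}(\Delta_P-2\rho)\ge \tfrac{\Delta_P^2}{4}$. Your remarks on the $\sqrt 2$ versus $2$ bookkeeping in the Davis--Kahan constant match the paper's handling exactly (it proves $\|\hat P - P_{k^\star}\|_F = \sqrt 2\,\|\sin\Theta\|_F$ and then loosens to the stated $2\sqrt r/\gamma_{k^\star}$).
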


The Davis–Kahan projector bound and the margin argument are given in the appendix;
all lemmas are deferred to
Appendix~\S\ref{app:sec3-proofs}.%
\footnote{We implement the same rule in feature space using $C:=V^\top HV$; spectral duality ensures equivalence to the item‑space analysis, see Proposition~\ref{prop:spectral-duality}.}

\textbf{Quantification: a one‑dimensional, intrinsic energy scale.}
Per-item uncertainty is quantified using the \emph{interaction energy}
\[
e(i;G):=\|G_{:,i}\|_2=\|V\,v_i\|_2 .
\]
With unit‑norm embeddings, $e(i;G)^2=\sum_{j=1}^n \cos^2\theta_{ij}$, so large $e$ indicates batch
consensus (alignment or anti‑alignment both count) and small $e$ indicates novelty. The scale is
\emph{intrinsic}: $1\le e(i;G)\le \sqrt{n}$, hence the normalized score $E(i):=e(i;G)/\sqrt{n}\in[0,1]$
is a label‑free, permutation‑invariant policy signal compatible with CRC. 

We instantiate the \emph{Gram–Projector Spectral‑Overlap (GPSO)} decision rule and verify that a
simple L2/no‑center pipeline achieves a best macro accuracy of \textbf{0.958} on a factual vs.\ unfactual QA split, while centered variants enlarge prototype separation ($\Delta_P\!\approx\!1.36$)
with a trade‑off in unfactual accuracy. Details of the compact table, algorithm, and LLM experiments are postponed to Appendix~\S\ref{app:gpso-alg}.%

Both \emph{semantic sufficiency} (decisions via leading subspaces) and \emph{quantification}
(one‑dimensional energy) live entirely in Gram space. This yields an auditable, label‑free,
model‑swap‑stable scalar $Q$ for our CRC actuator in~\S\ref{sec:carc-bbcrc}.

\section{Batched Bootstrap CRC}
\label{sec:carc-bbcrc}

\subsection{Monotonicity–consistent actionable loss (policy–first design)}
\label{subsec:severity-huberz-loss}

Our \emph{Conformal Actuator} uses a single actionable loss paired with a calibration-only quality flag:
\begin{equation}
\label{eq:actionable-loss-merged}
L(y,\lambda)\;=\;a_\lambda\!\big(Q(y)\big)\cdot m_\beta(y)\in[0,1],
\qquad
R(\lambda)\;=\;\mathbb{E}\!\left[L(Y_{\mathrm{new}},\lambda)\right].
\end{equation}
Here, $Q(y)$ is any scalar policy score; $a_\lambda:\mathbb{R}\!\to\![0,1]$ is a gate that is pointwise bounded and \emph{monotone (non-increasing) in $\lambda$}; and $m_\beta(y)\in[0,1]$ is an offline flag encoding the task’s risk to be controlled.

The family $\{a_\lambda\}$ is the control mechanism—instantiated as a binary indicator, a quantile gate, or a smooth gate—under the sole assumption that it is monotone in $\lambda$. As $\lambda$ increases, the action moves consistently in one direction (e.g., becomes stricter). Consequently, $\lambda$ is the single tuning knob that carries ground-truth calibration into a physical actuator (escalate, re-route, regenerate, abstain), while requiring no labels at test time.

The flag $m_\beta$ is a bounded, task-chosen signal that marks outcomes to avoid when the policy acts (e.g., factuality errors upon acceptance, or over-unification that harms diversity). It is evaluated \emph{only during calibration} using ground truth. CRC then learns the co-movement between this designated flag and the actionable policy by selecting $\hat\lambda$ to control $R(\lambda)$. At deployment, $m_\beta$ is not used; we apply the learned gate $a_{\hat\lambda}\!\big(Q(y)\big)$ in real time.

The gate \(a_\lambda\) is label-free and can run on cheap signals (e.g., Gram-based measurements), while \(m_\beta\) may be expensive/sparse/noisy and is used only in calibration. After tuning, no online labels are needed: we threshold a scalar policy score, which is compute-efficient, and with BB-CRC/RBWA, batching and bootstrap smoothing reduce LLM calls while preserving finite-sample validity.

\subsection{BB-CRC and RBWA-CRC}

With $\ell_{\lambda,\beta}$ defined, our goal is to select a single global threshold $\hat\lambda$ that keeps the expected loss well bounded. Conformal Risk Control (CRC) provides such a finite-sample guarantee. However, when the loss depends on LLM outputs, na\"ive CRC can be costly because each assessment may require multiple model invocations. \emph{Batched Bootstrap CRC (BB-CRC)} addresses both validity and efficiency by reusing a small held-out set and resampling it internally.

We split $n{=}GI$ instances into $G$ equal batches and, within each batch, draw $K$ bootstrap replicates from the same held-out data. A bias-corrected bootstrap average then yields a data-dependent threshold $\hat\lambda_Z$ that controls risk in finite samples. Practically, this delivers (i) \textbf{fewer LLM calls} at a fixed risk budget by recycling a batch via resampling, and (ii) \textbf{validity by design} maintaining exchangeability and theoretical guarantee.

\begin{restatable}[Distributional invariance]{lemma}{NewAndZDist}
\label{lem:new_and_z_dist}
Under the general assumptions, 
\refstepcounter{equation}\label{eq:lemma1-Ynew}%
$Y_{\mathrm{new}} \,\mid\, \{Z_j^g : j=1,\ldots,K;\ g=1,\ldots,G\}
\stackrel{D}{=} Y_{\mathrm{new}} \sim \mathbb{P}_{Y}$~(\theequation),
and, for each $j=1,\ldots,K$,
\refstepcounter{equation}\label{eq:lemma1-Z}%
$Z_{j}^{\,G+1} \,\mid\, \{Z_i^g : i=1,\ldots,K;\ g=1,\ldots,G\}
\stackrel{D}{=} Z_{j}^{\,G+1} \sim \mathbb{P}_{Y}$~(\theequation).
\end{restatable}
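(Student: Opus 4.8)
The plan is to prove both displays by a short independence/exchangeability bookkeeping argument, capped by the elementary observation that a uniform mixture of i.i.d.\ copies of $\mathbb{P}_Y$ is again $\mathbb{P}_Y$. Under the general assumptions, the held-out pool is a family of i.i.d.\ draws from $\mathbb{P}_Y$ partitioned into equal batches $g=1,\dots,G$; it is harmless---the data being i.i.d.---to enlarge the probability space by an independent ``virtual'' batch $g=G+1$, with $Y_1^{G+1},\dots,Y_I^{G+1}$ i.i.d.\ $\mathbb{P}_Y$, should the assumptions not already provide one. The test point $Y_{\mathrm{new}}\sim\mathbb{P}_Y$ is independent of everything, and the resampling randomness used to form the replicates is independent across batches and independent of all held-out data. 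The one structural fact to record is that each $Z_j^g$ is a measurable function of batch $g$'s held-out points and batch $g$'s resampling randomness alone; consequently $\{Z_j^g: j\le K,\ g\le G\}$ is measurable with respect to the $\sigma$-field $\sigma(\text{batches }1,\dots,G\text{ together with their resampling randomness})$.

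For \eqref{eq:lemma1-Ynew}: by construction $Y_{\mathrm{new}}$ is independent of the entire held-out pool and of all resampling randomness, hence independent of the sub-$\sigma$-field generated by $\{Z_j^g:j\le K,\ g\le G\}$. Independence of a random element from a $\sigma$-field means its conditional law given that $\sigma$-field coincides with its marginal law, which here is $\mathbb{P}_Y$; this is exactly the first display.

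For \eqref{eq:lemma1-Z}: fix $j$. The replicate $Z_j^{G+1}$ is a function of the batch-$(G+1)$ points and the batch-$(G+1)$ resampling randomness, and both of these are independent of $(\text{batches }1,\dots,G,\ \text{their resampling randomness})$, so $Z_j^{G+1}$ is independent of $\{Z_i^g:i\le K,\ g\le G\}$. As above, conditioning on $\{Z_i^g:g\le G\}$ does not change its law, so it remains only to identify the marginal law of $Z_j^{G+1}$. Conditionally on $Y_1^{G+1},\dots,Y_I^{G+1}$, the replicate $Z_j^{G+1}$ is obtained by selecting one of these $I$ points uniformly at random (sampling with replacement), so for any measurable $A$, $\mathbb{P}(Z_j^{G+1}\in A)=\mathbb{E}\!\big[\tfrac1I\sum_{i=1}^{I}\mathbf{1}\{Y_i^{G+1}\in A\}\big]=\mathbb{P}_Y(A)$, since each $Y_i^{G+1}\sim\mathbb{P}_Y$. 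Hence $Z_j^{G+1}\sim\mathbb{P}_Y$ both marginally and---by the independence just shown---conditionally on the calibration replicates, which is the second display.

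The only step that is not pure bookkeeping is this last marginal computation: one must resist conflating ``the bootstrap sample has the empirical law $\hat{\mathbb{P}}$'' with the statement at hand, and instead note that integrating the random empirical measure over the i.i.d.\ batch recovers the population law $\mathbb{P}_Y$. A secondary care point is making precise that the calibration replicates and the batch-$(G+1)$ (resp.\ the $Y_{\mathrm{new}}$) randomness occupy disjoint, mutually independent coordinates of the probability space; once the space is organized this way (appending the virtual batch if needed), the rest is immediate. Note that no within-batch independence among $Z_1^g,\dots,Z_K^g$ is required, since the lemma asserts only that $Y_{\mathrm{new}}$ (resp.\ $Z_j^{G+1}$) is independent of the \emph{entire} calibration collection, which follows from cross-batch independence alone.
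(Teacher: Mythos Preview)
Your proof is correct and follows essentially the same approach as the paper's. Both arguments establish \eqref{eq:lemma1-Ynew} by direct independence of $Y_{\mathrm{new}}$ from the calibration data, and both establish \eqref{eq:lemma1-Z} by computing the marginal law of $Z_j^{G+1}$ via the ``uniform mixture of i.i.d.\ copies of $\mathbb{P}_Y$ is again $\mathbb{P}_Y$'' identity; you are simply more explicit than the paper about the cross-batch independence step that justifies why conditioning on the first $G$ batches leaves the law of $Z_j^{G+1}$ unchanged.
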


This Lemma~\ref{lem:new_and_z_dist} underpins the BB-CRC procedure. Given the calibration replicates, a new outcome and a “next-round’’ bootstrap replicate from an unused batch behave as draws from the same population. This lets us compare the new outcome to the bootstrapped world under exchangeability and motivates the BB-CRC desgin. We now present the BB-CRC algorithm.

\begin{algorithm}[H]
  \caption{Batched Bootstrap Conformal Risk Control (BB‑CRC)}
  \label{alg:bbcrc}
  \begin{algorithmic}[1]
    \State \textbf{Input:} trajectories $\{Y_k\}_{k=1}^n$, batches $G$, replicates $K$, tolerance $\alpha$
    \State Partition $\{Y_k\}_{k=1}^n$ into $\{B_g\}_{g=1}^{G}$ of equal size $I=n/G$
    \For{$g=1$ \textbf{to} $G$}
        \State Draw $K$ bootstrap replicates $\{\mathbf Z^{g}_{j}\}_{j=1}^{K}$ from $B_g$
    \EndFor
    \State $\displaystyle
      \hat\lambda_Z \gets
      \inf\Bigl\{\lambda :
        \tfrac{1}{(G+1) K}\sum_{g=1}^{G}\sum_{j=1}^{K}
        L(\mathbf Z^{g}_{j},\lambda) + \tfrac{1}{G+1}\le\alpha
      \Bigr\}\,\land\,\lambda_{\text{max}}$
    \State \textbf{Return} $\hat\lambda_Z$
  \end{algorithmic}
\end{algorithm}

\begin{restatable}[Finite‑sample BB‑CRC]{theorem}{BBCRC}
\label{thm:bbcrc}
Assume $\{B_g\}_{g=1}^{G+1}$ are i.i.d and $\{Y_{g,1},\dots,Y_{g,I}\}$ are exchangeable for $g=1,2,\dots,G+1$. Let $Y_{\text{new}}=Y_{n+1}$. With loss $L$ right-continuous w.r.t. $\lambda$ and bounded in $[0,1]$ and
$L(\cdot,\lambda_{\text{max}})\le\alpha$,
the estimator $\hat\lambda_Z$ returned by
Algorithm~\ref{alg:bbcrc} satisfies
\[
  \mathbb{E}\bigl[L(Y_{\mathrm{new}},\hat\lambda_Z)\bigr]\le\alpha.
\]
\end{restatable}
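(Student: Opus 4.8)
The plan is to reduce the BB-CRC guarantee to the standard Conformal Risk Control bound of \cite{angelopoulos2025conformalriskcontrol} applied at the level of \emph{bootstrap replicates}. Think of the $G+1$ batches as the exchangeable units at the outer level; within batch $g$ we draw $K$ replicates $\mathbf Z^g_1,\dots,\mathbf Z^g_K$. The threshold $\hat\lambda_Z$ in Algorithm~\ref{alg:bbcrc} is exactly the CRC threshold computed from the $GK$ observed replicate-losses $L(\mathbf Z^g_j,\lambda)$, $g=1,\dots,G$, $j=1,\dots,K$, with the usual $\tfrac{1}{G+1}$ inflation term (here written as $\tfrac{1}{(G+1)K}\sum_{g,j}L + \tfrac{1}{G+1}$, which is the empirical mean over all $(G+1)K$ replicate slots if the unseen batch $G{+}1$'s $K$ replicates each contributed the maximal value $1$). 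The target quantity $\mathbb E[L(Y_{\mathrm{new}},\hat\lambda_Z)]$ is then handled by showing $Y_{\mathrm{new}}$ is exchangeable with — in fact, by Lemma~\ref{lem:new_and_z_dist}, distributionally equal, conditionally on all observed replicates, to — a ``next-round'' replicate $Z^{G+1}_j$ drawn from the held-out batch $B_{G+1}$.

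Concretely, the steps I would carry out are: (1) Fix notation: let $\hat R(\lambda) := \tfrac{1}{(G+1)K}\sum_{g=1}^{G}\sum_{j=1}^{K} L(\mathbf Z^g_j,\lambda)$, so $\hat\lambda_Z = \inf\{\lambda : \hat R(\lambda) + \tfrac{1}{G+1}\le\alpha\}\wedge\lambda_{\max}$. (2) Verify the CRC hypotheses: $L(\cdot,\lambda)$ is bounded in $[0,1]$, right-continuous and monotone (non-increasing) in $\lambda$ by the construction of $a_\lambda$ in \S\ref{subsec:severity-huberz-loss}, and $L(\cdot,\lambda_{\max})\le\alpha$ pointwise by assumption; these are exactly the conditions under which the CRC threshold is well-defined and the bound holds. (3) Exchangeability at the replicate level: because the batches $\{B_g\}_{g=1}^{G+1}$ are i.i.d.\ and each batch's internal sample is exchangeable, the $K$-tuples of replicate-generating functionals $(\mathbf Z^g_1,\dots,\mathbf Z^g_K)_{g=1}^{G+1}$ are i.i.d.\ across $g$, hence exchangeable; in particular, adjoining the as-yet-unused batch $G{+}1$ and any one of its replicates $Z^{G+1}_j$ to the pool, the collection of the relevant $GK+1$ loss curves is exchangeable. (4) Apply the standard CRC argument: conditioning on the unordered multiset of these $GK+1$ curves, each is equally likely to be the ``left-out'' one, and the monotone-loss / threshold construction gives $\mathbb E\big[L(Z^{G+1}_j,\hat\lambda_Z)\big] \le \tfrac{GK}{(G+1)K}\cdot\alpha + \tfrac{1}{G+1}\cdot 1$-type accounting collapsing to $\le\alpha$ — this is precisely Theorem~1 of \cite{angelopoulos2025conformalriskcontrol} with $n\to GK$ and the ``$+1$'' unit being the batch $G{+}1$ contributing its $K$ replicate slots. (5) Transfer to $Y_{\mathrm{new}}$: by Lemma~\ref{lem:new_and_z_dist}, both $Y_{\mathrm{new}}$ and each $Z^{G+1}_j$ are conditionally distributed as $\mathbb P_Y$ given the observed replicates $\{Z^g_i\}$, and $\hat\lambda_Z$ is measurable w.r.t.\ those observed replicates; hence $\mathbb E[L(Y_{\mathrm{new}},\hat\lambda_Z)] = \mathbb E[L(Z^{G+1}_j,\hat\lambda_Z)] \le \alpha$, which is the claim.

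The main obstacle — and the step deserving the most care — is step (3)–(4): making precise the sense in which the \emph{bootstrap} replicates inherit the exchangeability needed for the CRC bound. Bootstrap replicates drawn with replacement from a fixed batch are not themselves i.i.d.\ draws from $\mathbb P_Y$, and they are dependent \emph{within} a batch; what saves the argument is (a) the outer i.i.d.-across-batches structure, which makes the \emph{batch-level} objects exchangeable, and (b) Lemma~\ref{lem:new_and_z_dist}, which asserts that marginally (conditionally on the other replicates) a single ``fresh'' replicate from the untouched batch $B_{G+1}$ is distributed as $\mathbb P_Y$, matching $Y_{\mathrm{new}}$. So the delicate point is to run the CRC exchangeability argument at the granularity where it is actually valid — over the $G+1$ i.i.d.\ batch-blocks, treating the empirical average over the $GK$ in-hand replicates and the $\tfrac{1}{G+1}$ term as the ``$G{+}1$-block'' inflation — rather than naively pretending all $GK$ replicates are exchangeable with $Y_{\mathrm{new}}$. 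I would state a short lemma formalizing ``i.i.d.\ batches $\Rightarrow$ the $G{+}1$ blocks of loss-curves are exchangeable, and the plug-in of the maximal loss $1$ for the missing block's $K$ slots upper-bounds the true conformal quantile,'' then invoke \cite{angelopoulos2025conformalriskcontrol} verbatim. Everything else (right-continuity handling the infimum, boundedness, the $\lambda_{\max}$ fallback ensuring the set is nonempty) is routine.
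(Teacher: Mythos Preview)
Your proposal is correct and follows the same route as the paper: use Lemma~\ref{lem:new_and_z_dist} to replace $Y_{\mathrm{new}}$ by a replicate $Z_j^{G+1}$ from the held-out batch, then run the CRC argument at the \emph{block} level---the $G{+}1$ $K$-tuples $(Z_1^g,\dots,Z_K^g)$ are i.i.d., so introducing the oracle threshold $\hat\lambda_Z'$ on all $(G{+}1)K$ replicates and using block-exchangeability plus monotonicity gives the bound. Your final-paragraph diagnosis is exactly right and matches the paper's proof; just drop the step-(3) phrasing about ``$GK{+}1$ exchangeable loss curves'' (which is false, since within-batch replicates are dependent) and work at the batch-block granularity throughout.
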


Using Lemma~\ref{lem:new_and_z_dist}, BB-CRC calibrates by contrasting held-out losses with the “next-batch’’ bootstrap world, yielding $\hat\lambda_Z$ with the guarantee in Theorem~\ref{thm:bbcrc}. We next generalize by replacing within-batch resampling with a single randomized convex combination across items. The \emph{Randomized Batched Weighted Average CRC (RBWA-CRC)} method draws a simplex-valued weight vector $p_g$ per batch and computes a weighted mean of item losses in place of bootstrap replicates. This preserves finite-sample validity, introduces a transparent variance dial via the weight law, and enables mix-aware calibration—while leaving deployment unchanged (we still act via $a_{\hat\lambda}$).

\begin{algorithm}[H]
  \caption{Randomized Batched Weighted Average Conformal Risk Control (RBWA-CRC)}
  \label{alg:rbwacrc}
  \begin{algorithmic}[1]
    \State \textbf{Input:} $\{Y_k\}_{k=1}^n$, batches $G$, weight law $\mathcal P_{\mathcal S}$, tolerance $\alpha$
    \State Partition $\{Y_k\}$ into $\{B_g\}_{g=1}^{G}$ with $|B_g|=I=n/G$, $\{p_g\}_{g=1}^G$ are i.i.d.
    \For{$g=1{:}G$}
      \State Sample $p_g=(p_{g,1},\ldots,p_{g,I})\sim\mathcal P_{\mathcal S}$, independent of $B_g$
      \State Set $L_g(\lambda)=\sum_{i=1}^I p_{g,i}\,L(Y_{g,i},\lambda)$
    \EndFor
    \State $\displaystyle
      \hat\lambda_p \gets \Bigl(\inf\Bigl\{\lambda:
        \frac{1}{G+1}\sum_{g=1}^{G} L_g(\lambda)+\frac{1}{G+1}\le\alpha\Bigr\}\Bigr)\land \lambda_{\max}$
    \State \textbf{Return} $\hat\lambda_p$
  \end{algorithmic}
\end{algorithm}
\begin{restatable}[Finite‑sample RBWA‑CRC]{theorem}{RBWACRC}
\label{thm:rbwacrc}
Assume $\{B_g\}_{g=1}^{G+1}$ are i.i.d and $\{Y_{g,1},\dots,Y_{g,I}\}$ are exchangeable for $g=1,2,\dots,G+1$. Let $Y_{\text{new}}=Y_{n+1}=Y_{G+1,1}$. With loss $L$ right-continuous w.r.t. $\lambda$ and bounded in $[0,1]$ and
$L(\cdot,\lambda_{\text{max}})\le\alpha$,
the estimator $\hat\lambda_p$ returned by
Algorithm~\ref{alg:rbwacrc} satisfies
\[ \mathbb{E}\bigl[L(Y_{\mathrm{new}},\hat\lambda_p)\bigr]\le\alpha.
\]
\end{restatable}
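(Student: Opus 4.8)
The plan is to mirror the proof of Theorem~\ref{thm:bbcrc}, replacing the within-batch bootstrap replicates $\mathbf{Z}^g_j$ by the single randomized convex combination $L_g(\lambda)=\sum_{i=1}^{I}p_{g,i}L(Y_{g,i},\lambda)$, and to reduce the claim to ordinary CRC applied to the i.i.d.\ sequence of \emph{batch-loss functions} $(L_g)_g$. I will use the monotonicity built into the loss design of \S\ref{subsec:severity-huberz-loss} — that $\lambda\mapsto L(y,\lambda)$ is non-increasing — together with the stated hypotheses ($L\in[0,1]$, right-continuity in $\lambda$, $L(\cdot,\lambda_{\max})\le\alpha$). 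Each $L_g$ inherits all four properties since it is a convex combination of the $L(Y_{g,i},\cdot)$; in particular $L_g(\lambda_{\max})=\sum_i p_{g,i}L(Y_{g,i},\lambda_{\max})\le\alpha$.

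First I introduce a \emph{phantom} weight vector $p_{G+1}\sim\mathcal P_{\mathcal S}$ for the test batch, drawn independently of everything else, and set $L_{G+1}(\lambda):=\sum_{i=1}^{I}p_{G+1,i}L(Y_{G+1,i},\lambda)$. Since the batches $B_1,\dots,B_{G+1}$ are i.i.d.\ and the weight vectors $p_1,\dots,p_{G+1}$ are i.i.d.\ and independent of the batches, the random functions $L_1,\dots,L_{G+1}$ are i.i.d., hence exchangeable. Because $L_{G+1}(\lambda)\le 1$, we have $\frac{1}{G+1}\sum_{g=1}^{G+1}L_g(\lambda)\le\frac{1}{G+1}\sum_{g=1}^{G}L_g(\lambda)+\frac{1}{G+1}$, so the sub-level set defining $\hat\lambda_p$ is contained in the one defining $\hat\lambda^{\star}:=\bigl(\inf\{\lambda:\tfrac{1}{G+1}\sum_{g=1}^{G+1}L_g(\lambda)\le\alpha\}\bigr)\wedge\lambda_{\max}$, giving $\hat\lambda^{\star}\le\hat\lambda_p$. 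Monotonicity then yields $L_{G+1}(\hat\lambda_p)\le L_{G+1}(\hat\lambda^{\star})$, so it suffices to bound $\mathbb{E}[L_{G+1}(\hat\lambda^{\star})]$. For this I run the standard CRC symmetry argument: $\hat\lambda^{\star}$ is a symmetric function of $L_1,\dots,L_{G+1}$ (it depends only on their average $\Psi(\lambda):=\tfrac{1}{G+1}\sum_g L_g(\lambda)$), so by exchangeability $\mathbb{E}[L_{G+1}(\hat\lambda^{\star})]=\tfrac1{G+1}\sum_{g}\mathbb{E}[L_g(\hat\lambda^{\star})]=\mathbb{E}[\Psi(\hat\lambda^{\star})]$. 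A short case split — right-continuity of the non-increasing $\Psi$ at the infimum when the sub-level set is nonempty with infimum $\le\lambda_{\max}$, and the cap condition $\Psi(\lambda_{\max})\le\alpha$ otherwise — gives $\Psi(\hat\lambda^{\star})\le\alpha$ a.s., hence $\mathbb{E}[L_{G+1}(\hat\lambda^{\star})]\le\alpha$ and therefore $\mathbb{E}[L_{G+1}(\hat\lambda_p)]\le\alpha$.

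It remains to convert $\mathbb{E}[L_{G+1}(\hat\lambda_p)]$ into $\mathbb{E}[L(Y_{\mathrm{new}},\hat\lambda_p)]$ with $Y_{\mathrm{new}}=Y_{G+1,1}$. The key observation is that $\hat\lambda_p$ in Algorithm~\ref{alg:rbwacrc} is computed only from $(B_g,p_g)_{g=1}^{G}$, hence is independent of $(Y_{G+1,\cdot},p_{G+1})$. Conditioning on $\hat\lambda_p$ and using that $p_{G+1}$ is independent of $Y_{G+1,\cdot}$, that $(Y_{G+1,1},\dots,Y_{G+1,I})$ are exchangeable, and that $\sum_i p_{G+1,i}=1$,
\[
\mathbb{E}\!\left[L_{G+1}(\hat\lambda_p)\mid\hat\lambda_p\right]
=\sum_{i=1}^{I}\mathbb{E}[p_{G+1,i}]\,\mathbb{E}\!\left[L(Y_{G+1,i},\hat\lambda_p)\mid\hat\lambda_p\right]
=\mathbb{E}\!\left[L(Y_{G+1,1},\hat\lambda_p)\mid\hat\lambda_p\right],
\]
so $\mathbb{E}[L(Y_{\mathrm{new}},\hat\lambda_p)]=\mathbb{E}[L_{G+1}(\hat\lambda_p)]\le\alpha$, as claimed.

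I expect the main obstacle to be the bookkeeping around the phantom weight vector: one must check that adjoining $p_{G+1}$ genuinely makes $L_1,\dots,L_{G+1}$ exchangeable — this needs only i.i.d.\ batches and i.i.d.\ weights independent of the batches, and notably does \emph{not} require $\mathcal P_{\mathcal S}$ to be symmetric in its coordinates, since the symmetry being exploited is over whole batches, not over items within a batch — and that the final reduction to the per-item loss is powered by independence of $\hat\lambda_p$ from the test batch rather than by any within-batch symmetry of the threshold itself. The remaining ingredients (the threshold comparison, the monotonicity step, and the right-continuity/$\lambda_{\max}$ case analysis) are routine and parallel the corresponding parts of the proof of Theorem~\ref{thm:bbcrc}.
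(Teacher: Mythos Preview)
Your proposal is correct and follows essentially the same route as the paper: introduce a phantom weight vector $p_{G+1}$, show $\mathbb{E}[L(Y_{\mathrm{new}},\hat\lambda_p)]=\mathbb{E}[L_{G+1}(\hat\lambda_p)]$ via within-batch exchangeability and $\sum_i p_{G+1,i}=1$, compare $\hat\lambda_p$ to the full-$(G{+}1)$-batch threshold, and finish with the exchangeability-of-$(L_g)_g$ symmetry argument. The only cosmetic differences are the order of the two main reductions and that you factor through $\mathbb{E}[p_{G+1,i}]$ while the paper conditions on $p_{G+1}$; both are valid.
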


\paragraph{Remark: RBWA-CRC subsumes BB-CRC.}
Let $\{w_j\}_{j=1}^K\overset{i.i.d.}{\sim}\mathrm{Uniform}(\{1,\dots,I\})$ and set $u_i=\#\{j:w_j=i\}/K$, with $u=(u_1,\dots,u_I)\in\mathcal S$ and $\mathcal P_{\mathcal S}$ the law of $u$. Choosing $p_g\sim\mathcal P_{\mathcal S}$ in RBWA-CRC reproduces the BB-CRC resampling scheme within the RBWA template. Thus RBWA-CRC is a strict generalization: it retains the exchangeability logic, bias correction, and finite-sample validity, while replacing resampling with exogenous simplex weights that smooth and stabilize the empirical risk curve. In practice, design the loss once via \eqref{eq:actionable-loss-merged}, calibrate a single threshold with BB-CRC (bootstrap reuse) or RBWA-CRC (mix-aware weighted averaging), and deploy using the action rule alone.

\subsection{Why randomized weights help in RBWA-CRC}
\label{subsec:rbwa-core}

\begin{figure}[H]
  \centering
  \includegraphics[width=\linewidth]{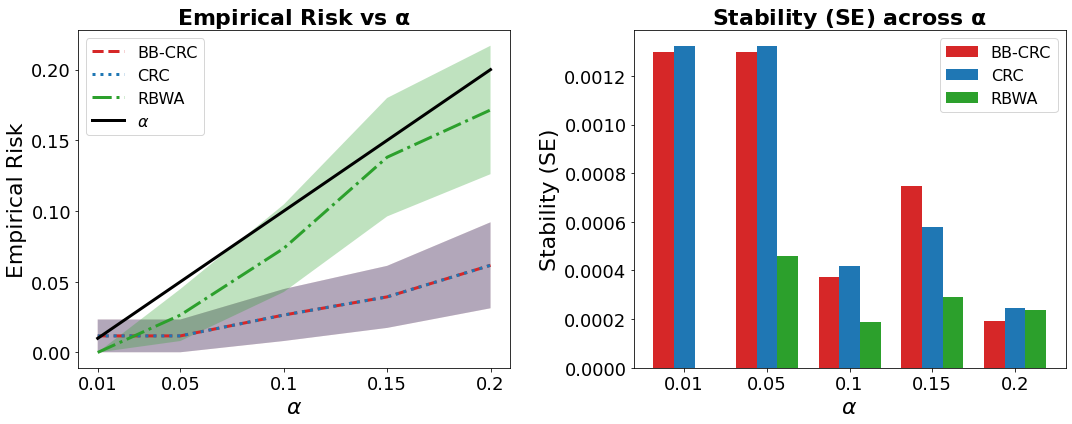}
  \caption{\textbf{Calibration comparison.}
  \textbf{Left:} Empirical risk at the calibrated threshold versus \(\alpha\). RBWA closely tracks \(y=\alpha\) as its property of  unbiased smoothing and anti‑concentration, while both BB-CRC and RBWA-CRC remain well‑bounded and BBCRC is more conservative. \textbf{Right:} Stability (measured as standard error of \(\hat\lambda\) ) versus \(\alpha\). RBWA demonstrates superior threshold stability, while BBCRC attains a moderate enhancement in stability compared to standard CRC.}
  \label{fig:calibration}
\end{figure}

RBWA computes the per-batch statistic
\[
L_g(\lambda)=\sum_{i=1}^I p_{g,i}\,\ell_{g,i}(\lambda),\qquad
\ell_{g,i}(\lambda)=L(Y_{g,i},\lambda)\in[0,1],\quad p_g\in\mathcal S,
\]
with exogenous weights $p_g$ independent of $B_g$. The two theorems below show—without assuming a specific loss form—why this randomization stabilizes calibration: random weights act as an \emph{unbiased smoother} with a single variance dial and remove lattice ties (anti-concentration). In practice, this smooths the CRC risk curve and yields more stable thresholds, without changing the actuator.

\begin{restatable}[RBWA moments: unbiased smoothing, variance dial, and anti-concentration]{theorem}{RBWAMoments}
\label{thm:rbwa-moments}
Let \(p_g\sim\mathrm{Dirichlet}(\eta\mathbf 1)\) with \(\eta>0\) and set \(\kappa:=I\eta\).
For any fixed \(\lambda\) and any bounded losses \(\{\ell_{g,i}(\lambda)\}_{i=1}^I\subset[0,1]\):
\begin{enumerate}[label=(\alph*),leftmargin=1.5em,itemsep=0.2em]
\item \emph{Unbiasedness:}
\(\;\mathbb E\!\big[L_g(\lambda)\mid\ell\big]=\mu(\lambda).\)
\item \emph{Variance dial:}
\(\;\mathrm{Var}\!\big(L_g(\lambda)\mid\ell\big)=\mathrm{Var}_{\mathrm{emp}}(\ell_g(\lambda))/(\kappa+1).\)
Thus, for any \(t>0\),
\[
\Pr\big(|L_g-\mu|\ge t\mid\ell\big)\le\frac{\mathrm{Var}_{\mathrm{emp}}(\ell_g)}{(\kappa+1)t^2},
\qquad
\Pr\big(L_g\ge \mu+t\mid\ell\big)\le\frac{\mathrm{Var}_{\mathrm{emp}}(\ell_g)}{\mathrm{Var}_{\mathrm{emp}}(\ell_g)+(\kappa+1)t^2}.
\]
\item \emph{Anti-concentration:}
if \((\ell_1(\lambda),\dots,\ell_I(\lambda))\) is not constant, then \(L_g(\lambda)\) has no atoms
\((\Pr(L_g=t\mid\ell)=0\ \text{for all }t)\), hence threshold ties caused by discrete lattice values disappear.
\end{enumerate}
\end{restatable}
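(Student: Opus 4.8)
The plan is to work conditionally on the batch losses throughout: fix $\lambda$, abbreviate $\ell_i=\ell_{g,i}(\lambda)$, and treat $\ell=(\ell_1,\dots,\ell_I)$ as a fixed vector (legitimate since $p_g$ is drawn independently of $B_g$), so that $\mu(\lambda)=\tfrac1I\sum_i\ell_i$ and $\mathrm{Var}_{\mathrm{emp}}(\ell_g(\lambda))=\tfrac1I\sum_i\ell_i^2-\mu(\lambda)^2$ are constants. The only distributional input is the first two moments of the symmetric Dirichlet weight vector: for $p_g\sim\mathrm{Dirichlet}(\eta\mathbf 1)$ on the simplex $\mathcal S\subset\R^I$ with $\kappa=I\eta$, symmetry gives $\mathbb E[p_{g,i}]=1/I$, and the standard Dirichlet covariance formulas specialize (all concentration parameters equal to $\eta$, summing to $\kappa$) to $\mathrm{Var}(p_{g,i})=(I-1)/(I^2(\kappa+1))$ and $\mathrm{Cov}(p_{g,i},p_{g,j})=-1/(I^2(\kappa+1))$ for $i\neq j$. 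First I would record these, either by citing a standard reference or by a one-line derivation from the Beta marginals of a Dirichlet.

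With the moments in hand, part (a) is immediate by linearity: $\mathbb E[L_g(\lambda)\mid\ell]=\sum_i\ell_i\,\mathbb E[p_{g,i}]=\mu(\lambda)$. For part (b) I would expand the variance bilinearly and use the symmetry of the weight moments to write
\[
\mathrm{Var}\big(L_g(\lambda)\mid\ell\big)=\mathrm{Var}(p_{g,1})\sum_i\ell_i^2+\mathrm{Cov}(p_{g,1},p_{g,2})\sum_{i\neq j}\ell_i\ell_j ;
\]
substituting the Dirichlet values and the identity $\sum_{i\neq j}\ell_i\ell_j=(\sum_i\ell_i)^2-\sum_i\ell_i^2$ collapses the bracket to $I\sum_i\ell_i^2-(\sum_i\ell_i)^2=I^2\,\mathrm{Var}_{\mathrm{emp}}(\ell_g(\lambda))$, so that $\mathrm{Var}(L_g(\lambda)\mid\ell)=\mathrm{Var}_{\mathrm{emp}}(\ell_g(\lambda))/(\kappa+1)$. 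The two displayed tail inequalities then follow by applying, respectively, Chebyshev's inequality and Cantelli's one-sided inequality $\Pr(X-\mathbb E X\ge t)\le \mathrm{Var}(X)/(\mathrm{Var}(X)+t^2)$ to $L_g(\lambda)\mid\ell$ with this conditional variance.

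For part (c) I would argue from absolute continuity: since $\eta>0$, the law of $p_g$ has a density with respect to $(I-1)$-dimensional Lebesgue measure on $\mathcal S$. If $\ell$ is not constant then $p\mapsto\langle\ell,p\rangle$ is a non-constant affine function on the affine hull of $\mathcal S$ (it is constant there iff $\ell\in\mathrm{span}(\mathbf 1)$), so for every $t$ the event $\{L_g(\lambda)=t\}$ is $\mathcal S$ intersected with a hyperplane not containing $\mathrm{aff}(\mathcal S)$; this set has dimension at most $I-2$, hence $(I-1)$-measure zero, giving $\Pr(L_g(\lambda)=t\mid\ell)=0$. In particular, when the $\ell_i$ take only finitely many values the lattice points that would otherwise be atoms of the empirical average carry no mass, so threshold ties disappear. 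I expect this last step to be the only subtle one: it requires confirming absolute continuity even for $\eta<1$ (where the density blows up at the boundary of $\mathcal S$ but remains integrable) and handling the boundary cases $t\in\{\min_i\ell_i,\max_i\ell_i\}$, where the slice degenerates to a face of $\mathcal S$ yet is still lower-dimensional. The moment computations for (a)--(b) are routine; the one bookkeeping hazard there is keeping the normalization $\kappa=I\eta$ consistent so the algebra lands exactly on the factor $1/(\kappa+1)$.
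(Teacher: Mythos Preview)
Your proposal is correct and follows essentially the same route as the paper: the same symmetric-Dirichlet moment identities feed the bilinear variance expansion to land on $\mathrm{Var}_{\mathrm{emp}}(\ell)/(\kappa+1)$, Chebyshev and Cantelli give the tails, and the anti-concentration step is the same ``Dirichlet has a density on the simplex, so a non-constant linear functional's level set is a codimension-one slice of Lebesgue measure zero'' argument. Your added remarks on $\eta<1$ and the boundary faces are welcome care beyond what the paper spells out, but do not change the approach.
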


Keeping the weight precision $\kappa=I\eta$ roughly constant across folds makes dispersion comparable across iterations. A CLT then yields closed-form bands for $\bar L_G(\lambda)$ and supports a simple operational rule: choose the smallest $\lambda$ whose \emph{upper} CLT band (plus the standard $+1/(G{+}1)$ correction) lies below $\alpha$.

\begin{restatable}[RBWA calibration CLT under precision stabilization]{theorem}{RBWACLT}
\label{thm:rbwa-clt}
Fix $\lambda$. Assume batches are i.i.d., and losses are bounded in $[0,1]$.
Let $p_g \sim \operatorname{Dirichlet}(\eta \mathbf{1})$ and set $\kappa := I\eta$.
Define
\[
\mu(L)=\mathbb{E}[\mu_g],\qquad
\operatorname{Var}(L)=\frac{\mathbb{E}\!\left[\operatorname{Var}_{\mathrm{emp}}(\ell_g)\right]}{\kappa+1}
+\operatorname{Var}(\mu_g).
\]
Assume $\operatorname{Var}(\ell)$ is finite. Then, as $G\to\infty$,
\[
\sqrt{G}\,\big(\bar L_G-\mu(L)\big)\ \xrightarrow{d}\ 
\mathcal{N}\!\left(0,\operatorname{Var}(L)\right),
\qquad
\bar L_G=\frac1G\sum_{g=1}^G L_g .
\]
\end{restatable}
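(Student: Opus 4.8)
The plan is to verify the standard Lindeberg–L\'evy CLT hypotheses for the i.i.d.\ triangular-free sequence $L_1(\lambda),\ldots,L_G(\lambda)$ and then read off the limiting variance via the law of total variance. First I would observe that, conditionally on the loss vector $\ell_g=(\ell_{g,1}(\lambda),\ldots,\ell_{g,I}(\lambda))$, Theorem~\ref{thm:rbwa-moments}(a)–(b) give $\mathbb{E}[L_g(\lambda)\mid\ell_g]=\mu_g:=\tfrac1I\sum_i\ell_{g,i}(\lambda)$ and $\operatorname{Var}(L_g(\lambda)\mid\ell_g)=\operatorname{Var}_{\mathrm{emp}}(\ell_g)/(\kappa+1)$. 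Since the pairs $(B_g,p_g)$ are i.i.d.\ across $g$ (batches i.i.d.\ by assumption, weights drawn i.i.d.\ and independent of the batch), the $L_g(\lambda)$ are i.i.d.\ as well. Taking expectations of the conditional moments and applying the tower property gives the unconditional mean $\mathbb{E}[L_g]=\mu(L)=\mathbb{E}[\mu_g]$, and the law of total variance gives
\[
\operatorname{Var}(L_g)=\mathbb{E}\!\left[\operatorname{Var}(L_g\mid\ell_g)\right]+\operatorname{Var}\!\left(\mathbb{E}[L_g\mid\ell_g]\right)=\frac{\mathbb{E}[\operatorname{Var}_{\mathrm{emp}}(\ell_g)]}{\kappa+1}+\operatorname{Var}(\mu_g),
\]
which is exactly the claimed $\operatorname{Var}(L)$.

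Next I would check finiteness and the moment condition needed for the classical CLT. Because the losses lie in $[0,1]$, each $L_g(\lambda)=\sum_i p_{g,i}\ell_{g,i}(\lambda)$ is a convex combination of numbers in $[0,1]$ and hence $L_g(\lambda)\in[0,1]$ almost surely; in particular $\operatorname{Var}(L_g)<\infty$ unconditionally (it is bounded by $1/4$), so the stated hypothesis ``$\operatorname{Var}(\ell)$ finite'' is automatically met and the limiting variance is finite. Then the Lindeberg–L\'evy CLT applies directly to the i.i.d.\ sequence $L_g(\lambda)$ with finite mean $\mu(L)$ and finite variance $\operatorname{Var}(L)$, yielding
\[
\sqrt{G}\,\big(\bar L_G-\mu(L)\big)\ \xrightarrow{d}\ \mathcal{N}\!\big(0,\operatorname{Var}(L)\big),\qquad \bar L_G=\frac1G\sum_{g=1}^G L_g(\lambda),
\]
which is the assertion.

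There is essentially no deep obstacle here: the result is a one-line consequence of the i.i.d.\ CLT once the conditional-moment identities from Theorem~\ref{thm:rbwa-moments} are in hand. The only points requiring care are bookkeeping ones: (i) making explicit that the joint randomness $(B_g,p_g)$ — not just $p_g$ — is i.i.d.\ across batches, so that integrating out the conditioning in Theorem~\ref{thm:rbwa-moments} is legitimate and the $L_g$ are genuinely i.i.d.\ rather than merely exchangeable; (ii) confirming the variance decomposition is applied to the correct conditioning $\sigma$-field (the batch $\ell_g$, with $p_g$ still random), which is what produces the two-term form with the $1/(\kappa+1)$ shrinkage on the within-batch piece and the full $\operatorname{Var}(\mu_g)$ on the between-batch piece; and (iii) noting that the degenerate case where $\operatorname{Var}(L)=0$ (all losses a.s.\ constant in $g$ and $i$) is covered trivially, the Gaussian limit then being a point mass at $0$. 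I would also remark that $\lambda$ is held fixed throughout, so no uniformity-in-$\lambda$ or process-level convergence is claimed or needed — the $+1/(G{+}1)$ correction and the operational ``smallest $\lambda$ below the upper band'' rule mentioned after the theorem are downstream applications, not part of this statement.
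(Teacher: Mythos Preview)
Your proof is correct and follows the same high-level strategy as the paper: read off the conditional moments from Theorem~\ref{thm:rbwa-moments}, then apply a CLT to the sequence $L_g(\lambda)$. If anything, your route is a bit tighter than the paper's own sketch---you invoke the i.i.d.\ Lindeberg--L\'evy CLT directly and obtain the two-term limiting variance via the law of total variance, whereas the paper appeals to a triangular-array Lindeberg--Feller argument plus Slutsky; your explicit handling of the joint i.i.d.\ structure of $(B_g,p_g)$ and of the degenerate-variance edge case is also cleaner.
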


On LLM responses (ASQA) in Fig.~\ref{fig:calibration}(a), both BB-CRC and RBWA stay bounded by the risk budget, with RBWA aligning more closely to the target \(y{=}\alpha\), while CRC and BB-CRC exhibit conservatism. This agrees with Theorems~\ref{thm:rbwa-moments}–\ref{thm:rbwa-clt}: Dirichlet randomization yields an \emph{unbiased}, \emph{anti-concentrated} batch loss, so $\bar L_G(\lambda)$ is smooth and the calibration constraint tends to be \emph{active}, matching \(\alpha\) up to CLT-scale fluctuations. In \texorpdfstring{(Fig.~\ref{fig:calibration}\,(b))}{(Fig.~\ref{fig:calibration}(b))}, RBWA is observed to achieve the lowest standard error/optimal parameter stability of the calibrated threshold across \(\alpha\).

\section{Experiments: LLM Factuality Lifting }
\label{sec:exp-hall}

\subsection{Data Generation Pipeline and Metrics}

Two key questions are discussed: (i) Can our conformal actuator framework reduce LLM hallucination by improving factual accuracy across various datasets and contexts? (ii) Can the framework align an LLM-as-Judge score with factuality to make its randomness \emph{measurable} in terms of factuality and reliability? 

We evaluate across four complementary QA datasets, each surfacing a distinct failure mode:
\emph{ASQA}—ambiguity and under-specification \citep{stelmakh2023asqafactoidquestionsmeet};
\emph{NQ-Open}—single‑hop factoid retrieval \citep{lee2019latentretrievalweaklysupervised,kwiatkowski-etal-2019-natural};
\emph{HotpotQA}—multi‑hop composition \citep{yang2018hotpotqadatasetdiverseexplainable};
and \emph{AmbigQA}—aliases and answer sets \citep{min2020ambigqaansweringambiguousopendomain}.
To probe sensitivity, we add two ablations: a decoding entropy stress test and a vendor swap. 

For every open-domain QA query, we create a varied \emph{response set} combining \emph{plain} answers with structured \emph{noise}, and assess each candidate using the clear metric \textbf{Factuality Severity} (FS). All artifacts are kept provider‑agnostic across OpenAI, Together, and Gemini \citep{openai2024gpt4ocard,grattafiori2024llama3herdmodels,geminiteam2024gemini15unlockingmultimodal}. We hold the measurement pipeline fixed—decoding knobs, the counts of paraphrases and answers per item, and the normal-noise mix while spanning providers and model sizes (e.g., Llama‑3.3‑70B, Mixtral‑8$\times$7B, Llama‑3.1‑8B, GPT‑4o‑mini) \citep{grattafiori2024llama3herdmodels,jiang2024mixtralexperts,grattafiori2024llama3herdmodels,openai2024gpt4ocard}. Separating \emph{what we measure} from \emph{what we vary} (datasets, temperatures, and models) shows that conclusions do not hinge on any single setting: The criteria for being "far from truth" and "out of consensus" are consistently maintained across various tasks and providers.

To measure deviation from references, we employ a BERTScore-F1 \citep{zhang2020bertscoreevaluatingtextgeneration} adjusted to the baseline focusing on \emph{answer head}. Define $R_q$ as the paraphrased reference set for a given question $q$, and $\text{head}(a)$ as the candidate's head. We introduce \textbf{Factuality Severity (FS)} as
\begin{equation}
\label{eq:fs}
\mathrm{FS}(a)\;=\;1-\max_{r\in R_q}\mathrm{BERTScoreF1}\!\big(\text{head}(a),\,r\big)\in[0,1].
\end{equation}
$\mathrm{FS}(a)=0$ signals exact alignment with the reference, indicating the response is essentially a paraphrase. Scores near $1$ imply semantic divergence. Prioritizing response head reduces bias from reasoning and length.

An LLM judge gives a rubric-based score \(J(a)\in[0,100]\) to the answer head (correctness, faithfulness, completeness, clarity; G-Eval style) \citep{liu-etal-2023-g}; we normalize this as \(J_{\mathrm{norm}}(a)=J(a)/100\) and define \textbf{LLM-as-Judge Severity (JS)} as \(\mathrm{JS}(a)=1-J_{\mathrm{norm}}(a)\) ranging from 0 to 1.

\subsection{Factuality Lifting on Actionable Policy}
\label{sec:factuality-lifting}

We retain the same policy–first loss:
\begin{equation}
\label{eq:loss42}
\mathcal{L}(y,\lambda)=a_{\lambda}\!\big(Q(y)\big)\cdot m(y),\qquad 
a_{\lambda}(u)=\mathbf{1}\{u\ge\lambda\},
\end{equation}
where $m(y)\in[0,1]$ is an \emph{offline} factuality severity used only for calibration and $Q(y)$ is a \emph{label-free, online} policy score. At deployment we compute $Q(y)$, apply the gate $a_{\hat{\lambda}}(Q(y))$, and never read $m(y)$. The single knob $\lambda$ therefore maps statistical calibration into a physical action (ship/abstain/regen/escalate), while cleanly separating \emph{measurement} ($m$) from \emph{action} ($Q$). We instantiate two choices for the online score:

{\small
\begin{itemize}[leftmargin=*,nosep]
  \item \textbf{(P1) Gram–energy consensus} $Q_E(y)=E(y)\in[0,1]$: a row-energy signal from the response-queue centered Gram geometry (cf.\ Eq.~2.4), cheap and label-free.
  \item \textbf{(P2) LLM-as-Judge} $Q_J(y)=J_{\text{norm}}(y)\in[0,1]$: a rubric grade on the answer head from a light grader (G-Eval style).
\end{itemize}
}

\begin{figure*}[!t]
  \centering
  \begin{subfigure}[t]{0.49\textwidth}
    \centering
    \includegraphics[width=\linewidth]{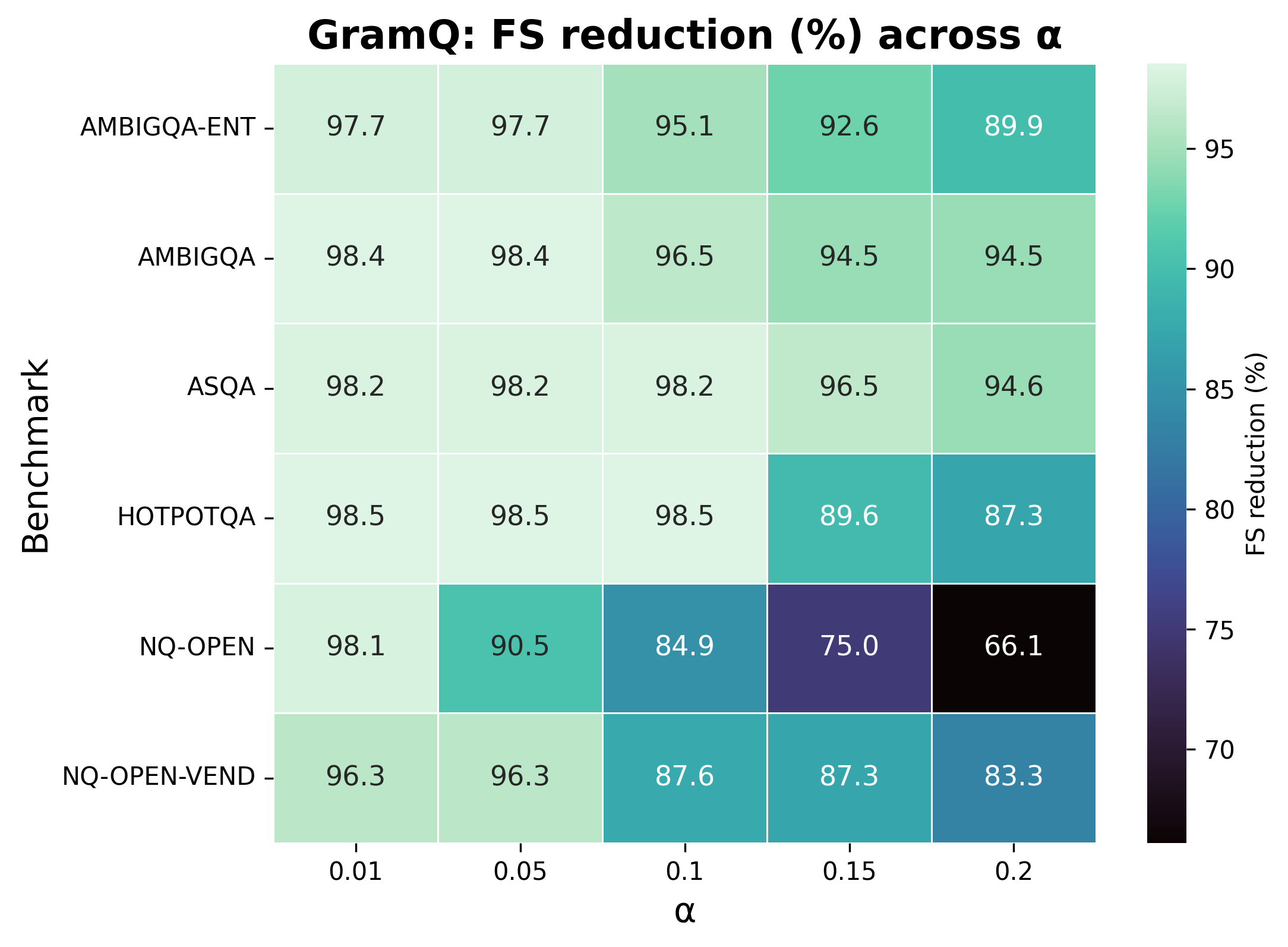}
    \caption{Policy \(Q_E\): FS reduction (\%) per benchmark.}
    \label{fig:heatmap-gramq-seaborn}
  \end{subfigure}\hfill
  \begin{subfigure}[t]{0.49\textwidth}
    \centering
    \includegraphics[width=\linewidth]{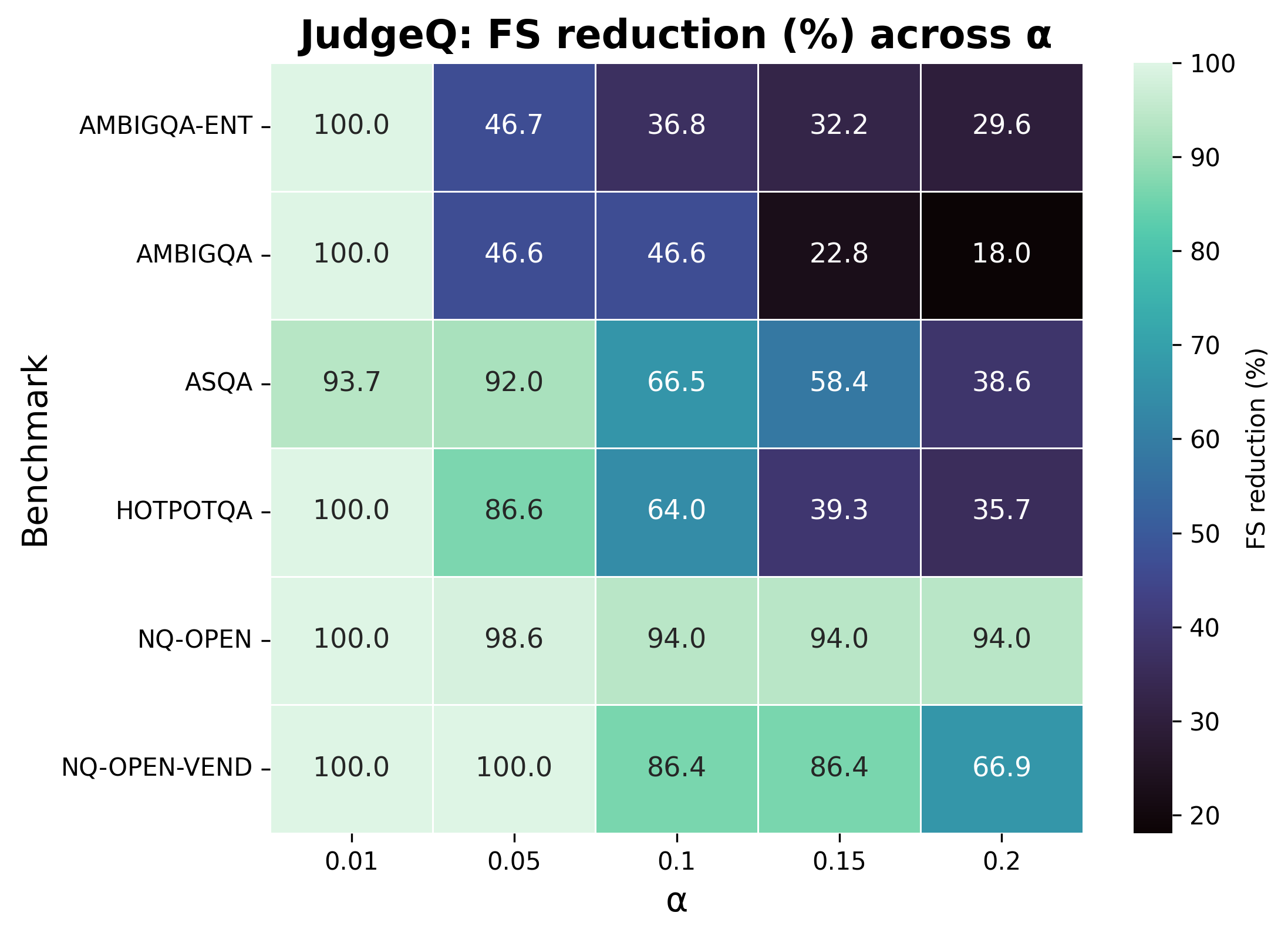}
    \caption{Policy \(Q_J\): FS reduction (\%) per benchmark.}
    \label{fig:heatmap-judgeq-seaborn}
  \end{subfigure}
  \caption{\textbf{Factuality lifting across diversified settings.}
  Heatmaps show the \% drop in \(FS\) from \emph{Unshipped} to \emph{Shipped} under the same gate \(a_{\hat\lambda}\); rows are benchmarks (incl.\ ablations), columns are risk budgets \(\alpha\); left/right panels differ only by the policy score \(Q\) (\(Q_E\) vs.\ \(Q_J\)).
  With \(Q_E\) (left), reductions are high and notably \emph{uniform} across datasets and \(\alpha\), remaining stable under entropy stress and provider/model swaps.
  With \(Q_J\) (right), the judge‑based policy also yields substantial gains, with lift varying more by task and budget.}
  \label{fig:fs-heatmaps-seaborn}
\end{figure*}

Across four QA datasets (ASQA, NQ-Open, HotpotQA, AmbigQA) and two ablations, we hold the measurement pipeline fixed and vary dataset, temperature, and provider. Under these controlled variations, both \emph{policy} scores, Gram energy \(Q_E\) and LLM-as-Judge \(Q_J\), lift factuality, with \(Q_E\) exhibiting more \emph{uniform} improvements across tasks and risk budgets (Fig.~\ref{fig:fs-heatmaps-seaborn}). By design, \(Q_E\) is a consensus-seeking, outputs-only signal derived from centered Gram geometry: it is normalized to \([0,1]\), permutation-invariant, and—in entropy stress tests—suppresses isolated outliers while concentrating acceptance on dense semantic modes; in vendor/model swaps, its dependence on outputs plus a fixed encoder preserves acceptance regions and supports a portable, API-level control layer. \(Q_J\) remains operationally useful wherever a rubric is available; paired with CRC, its score becomes measurable and risk-trackable, though its FS lift attenuates at larger \(\alpha\).

\subsection{Baseline}
\label{sec:results-baselining}

\begin{table}[H]
  \centering
  \small
  \begin{tabular}{@{}lcc@{}}
    \toprule
    \textbf{Name} & \textbf{Policy score \(Q\)} & \textbf{Gate / Calibration} \\
    \midrule
    G\text{-}Eval\text{-}Naive & Judge score \(Q_N = J_{\text{norm}}\in[0,1]\) & Fixed \(\lambda\in\{0.99,0.95,0.90,0.85,0.80\}\) (no guarantees) \\
    G\text{-}Eval\text{-}CRC & Judge score \(Q_J = J_{\text{norm}}\) & BB-CRC threshold \(\hat\lambda(\alpha)\) (finite-sample validity) \\
    Gram\text{-}CRC          & Gram energy \(Q_E = E\in[0,1]\)      & RBWA-CRC threshold \(\hat\lambda(\alpha)\) (finite-sample validity) \\
    \bottomrule
  \end{tabular}
  \caption{\textbf{Mode summary.} Modes vary in online score \(Q\) and calibration (fixed vs.\ CRC ).}
    \label{tab:mode-summary}
\end{table}

To baseline the actuator, we ablate along two axes: (i) the online \emph{policy score} \(Q \in \{Q_J,Q_E\}\) and (ii) how the threshold \(\lambda\) is set (fixed versus CRC). This yields three deployment modes that all use the \emph{same} one‑knob gate \(a_\lambda\) but differ only in the score and calibration (Table ~\ref{tab:mode-summary}).

\begin{table}[H]
  \centering
  \label{tab:fs-reduction-three-modes}
  \small
  \begin{tabular}{@{}lccccc@{}}
    \toprule
    \textbf{Method} & \(\boldsymbol{\alpha{=}0.01}\) & \(\boldsymbol{0.05}\) & \(\boldsymbol{0.10}\) & \(\boldsymbol{0.15}\) & \(\boldsymbol{0.20}\) \\
    \midrule
    G-Eval-Naive   & 12.8 & 9.7 & 8.9 & 9.1 & 9.0 \\
    G-Eval-CRC     & 98.9 & 78.3 & 65.3 & 55.0 & 46.5 \\
    Gram-CRC       & 97.9 & 96.7 & 93.6 & 89.4 & 86.0 \\
    \bottomrule
  \end{tabular}
\caption{\textbf{FS reduction (\%) across risk budgets \(\alpha\).}
Entries are the percentage drop in \(FS\); higher is better.
Moving from a fixed judge threshold to CRC (\emph{G‑Eval‑Naive}\(\rightarrow\)\emph{G‑Eval‑CRC}) shows the gain from calibration, while switching the policy score to Gram energy (\emph{G‑Eval‑CRC}\(\rightarrow\)\emph{Gram‑CRC}) yields the strongest and most uniform lift—for example, at \(\alpha{=}0.20\) the reductions are \(86.0\%\) (Gram‑CRC) vs.\ \(46.5\%\) (G‑Eval‑CRC) vs.\ \(\approx9\%\) (G‑Eval‑Naive).}
\end{table}

Crossing \emph{policy} ($Q_E$ vs.\ $Q_J$) with \emph{calibration} (fixed threshold vs.\ CRC) yields three modes that \emph{share the same actuator} $a_\lambda$: \emph{G-Eval-Naive} (pure baseline; fixed judge threshold, no guarantees), \emph{G-Eval-CRC} (judge policy made risk-controlled), and \emph{Gram-CRC} (full geometry policy). This design serves two purposes. \emph{First}, \textbf{G-Eval-CRC} is both a strong baseline against \textbf{Gram-CRC} \emph{and} our instrument for controlling LLM-as-judge randomness: calibration turns the judge score into a \emph{measurable, risk-tracking} knob. \emph{Second}, \textbf{G-Eval-Naive} isolates the value of calibration itself. Table 2 quantifies the two-step story: \emph{Naive}$\!\to$\emph{CRC} captures the gain from \emph{validity} (e.g., $9\%\!\to\!46.5\%$ FS reduction at $\alpha{=}0.20$), while \emph{G-Eval-CRC}$\!\to$\emph{Gram-CRC} captures the gain from the \emph{policy signal} (to $86.0\%$ at $\alpha{=}0.20$). Our conformal risk control framework maintains risk within budget and stabilizes thresholds, allowing for \emph{single calibration and frequent deployment with controlled LLM randomness}.
\section{Conclusion}

We introduce a concise \emph{calibrate-once, deploy-often} framework for controlling risk in black-box LLMs. This model operates using a single scalar \emph{actuator} with a unified monotone threshold. The Conformal Risk Control (CRC) methodology provides finite-sample assurances within a specified risk level $\alpha$. Two variants further strengthen reliability and efficiency: \textbf{BB-CRC} (batched bootstrap CRC) boosts data efficiency by pooling across bootstrap splits, and \textbf{RBWA-CRC} (randomized batch weight) minimizes threshold variance, enhancing deployment stability. Alongside CRC, our \emph{Gram geometry sufficiency} principle swiftly provides auditable uncertainty quantification at the API boundary by converting complex semantics into dependable metrics. Taken together, these pieces constitute a \emph{general} and \emph{portable} template for risk control: any task with a monotone loss can inherit the same actuator-and-threshold mechanism, making our approach immediately extensible beyond LLM setting to broader risk control problems. 

In real-world LLM settings, the calibrated actuator systematically tames \emph{stochastic generative variability} in black-box models. The actuator meets target risk budgets and produces consistent \emph{factuality lift}, thereby mitigating hallucination without token-level probabilities or labels. Beyond generation, the same actuator enables \emph{LLM-as-judge} routing and triage: it makes judge pipelines measurable, portable across models, and auditable for production governance. In short, a single calibrated actuator turns LLM variability into validity: geometry ($Q_E$) provides provider-agnostic gains, and CRC makes those gains allocatable at a user-chosen risk budget.

\paragraph{Limitations \& Future Work.}
We highlight two directions. \textbf{(1) Relaxed exchangeability.} Our guarantees rest on exchangeability; relaxing this assumption to handle covariate shift, prompt drift, and temporal dependence is a key next step. \textbf{(2) LLM-as-judge at scale.} We aim to broaden the judge setting from QA factuality to pairwise ranking, critique grading, safety adjudication, and multi-judge ensembling, exploring how $Q_E$ and CRC interact with rubric design, aggregation, and adversarial prompting. 

\paragraph{Use of AI for language editing.}

We used OpenAI ChatGPT and Overleaf Writefull solely for language polishing (grammar, clarity, and style) of author-written text. All ideas, experiments, and conclusions are the authors’ own. The authors reviewed and verified all content and take full responsibility for any errors. 
\newpage

\bibliographystyle{abbrvnat}
\bibliography{iclr2026_conference}

\newpage
\appendix
\crefname{section}{Appendix}{Appendices}
\Crefname{section}{Appendix}{Appendices}

\crefname{section}{Appendix}{Appendices}
\Crefname{section}{Appendix}{Appendices}

\section{Proofs}
\label{app:proofs}


\subsection{Section 3: Proofs, Self‑Consistency Link, Duality}
\label{app:sec3-proofs}

Unit‑norm embeddings $v_i=\psi(y_i)\in \mathbb{R}^d$; $V\in\mathbb{R}^{n\times d}$ with rows $v_i^T$; $G=VV^\top$; $H=I_n-\tfrac{1}{n}\mathbf{1}\mathbf{1}^\top$; $\tilde G=HGH$.

\paragraph{Setup and notation.}
Let $\tilde G\in\mathbb{R}^{n\times n}$ be the centered sample Gram matrix computed from a test batch of $n$ vectors. For each regime (class) $k\in\{1,\dots,K\}$, take $M_k$ calibration batches of size $n$ and compute their centered Grams $\tilde G_k^{(m)}$ for $m = 1,2,...,M_k$. Denote \( S_k := \mathbb{E}[\tilde G \mid k]\), decomposed as \( S_k = U_k \Lambda_k U_k^\top\),
which is estimated by
\[ \hat S_k = \frac{1}{M_k} \sum_{m=1}^{M_k} \tilde G_k^{(m)},\qquad \hat S_k = \hat U_k \hat\Lambda_k \hat U_k^\top\]
and assume an eigengap at rank $r$:
\[
\gamma_k \;=\; \lambda_r(S_k)-\lambda_{r+1}(S_k) \;>\;0.
\]
Let $P_k := U_k^{(r)}(U_k^{(r)})^\top$ be the (theoretical) rank-$r$ population projector for class $k$.
From calibration data, form empirical prototype projectors $\hat P_k$ (at rank $r$): for the calibration batch let $\hat U_k^{(r)}$ be the top-$r$ eigenvectors (corresponding to the $r$ top eigenvalues) of $\hat S_k$, and define \(\hat P_k \;=\; \hat U_k^{(r)} \big(\hat U_k^{(r)}\big)^\top\).
Similarly, for the test batch let let $\hat U^{(r)}$ be the top-$r$ eigenvectors of $\tilde G$ and define \(\hat P \;=\; \hat U^{(r)} \big(\hat U^{(r)}\big)^\top\) the sample projector.
Define the between-class separation (on prototypes)
\[
\Delta_P \;=\; \min_{j\ne\ell} \|\hat P_j - \hat P_\ell\|_F .
\]


\paragraph{Classifier}
\begin{equation}
\label{eq:so-rule-app}
\hat k \;=\; \arg\max_{k} \langle \hat P,\hat P_k\rangle_F .
\end{equation}
This depends on the test data only through $\tilde G$ (via $\hat P$) and the stored Gram‑space prototypes $\{\hat P_k\}$.

\subsubsection{Projector perturbation and semantic sufficiency (main‐text Theorem~\ref{thm:gram-sufficiency})}

\begin{lemma}[Davis--Kahan projector perturbation; Frobenius form]\label{lem:davis-kahan-frob}
If $\|\tilde G - S_k\|_{\mathrm{op}} \le \varepsilon$, then the top‑$r$ projector $\hat P$ of $\tilde G$ obeys
$\|\hat P - P_k\|_F \le \frac{2\sqrt{r}}{\gamma_k}\,\varepsilon$.
\end{lemma}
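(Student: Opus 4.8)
I would obtain this from the Davis--Kahan $\sin\Theta$ theorem together with two elementary facts about orthogonal projectors. Write $E:=\tilde G-S_k$, so $\|E\|_{\mathrm{op}}\le\varepsilon$, and recall that $P_k$ and $\hat P$ are rank-$r$ orthogonal projectors; $\hat P$ is well defined because $\gamma_k>0$ forces, via Weyl's inequality, a positive gap of $\tilde G$ at rank $r$ as soon as $\varepsilon$ is a small fraction of $\gamma_k$. For the complementary regime $\varepsilon\ge\gamma_k$ the claim is immediate, since $\|\hat P-P_k\|_F\le\|\hat P\|_F+\|P_k\|_F=2\sqrt r\le (2\sqrt r/\gamma_k)\varepsilon$; so the work is in the small-$\varepsilon$ case.

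The first projector fact is the identity $\|\hat P-P_k\|_F=\sqrt2\,\|\sin\Theta\|_F$, where $\Theta$ collects the principal angles between $\mathrm{range}(\hat P)$ and $\mathrm{range}(P_k)$; it follows from $\|\hat P\|_F^2=\|P_k\|_F^2=r$ and $\langle\hat P,P_k\rangle_F=\sum_i\cos^2\theta_i$. The second is that at most $r$ principal angles are nonzero, hence $\|\sin\Theta\|_F\le\sqrt r\,\|\sin\Theta\|_{\mathrm{op}}$. It then remains to bound $\|\sin\Theta\|_{\mathrm{op}}$ (equivalently $\|\hat P-P_k\|_{\mathrm{op}}$) by $\varepsilon/\gamma_k$, up to a harmless constant: applying the operator-norm Davis--Kahan $\sin\Theta$ theorem to the symmetric pair $(S_k,\ \tilde G=S_k+E)$, the retained eigenvalues of $S_k$ all exceed $\lambda_r(S_k)$ while, by Weyl, the discarded eigenvalues of $\tilde G$ are at most $\lambda_{r+1}(S_k)+\varepsilon$, so the two spectral parts are separated, giving $\|\sin\Theta\|_{\mathrm{op}}\le\varepsilon/(\gamma_k-\varepsilon)$. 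Chaining the three estimates yields $\|\hat P-P_k\|_F\le\sqrt{2r}\,\varepsilon/(\gamma_k-\varepsilon)$, which is $\le(2\sqrt r/\gamma_k)\varepsilon$ once $\varepsilon$ is a small fraction of $\gamma_k$ — the relevant regime, since Theorem~\ref{thm:gram-sufficiency} already forces $\varepsilon_n$ small relative to $\gamma_{k^\star}$. Alternatively, invoking the Davis--Kahan--Wedin form $\|\sin\Theta\|_F\le\|E P_k\|_F/\gamma_k\le\sqrt r\,\varepsilon/\gamma_k$ (the residual $EP_k$ has rank at most $r$) produces the stated constant directly, with the unperturbed gap.

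The only genuinely delicate point is the constant/gap bookkeeping in the $\sin\Theta$ step — whether one states Davis--Kahan with the unperturbed gap $\gamma_k$ or the effective gap $\gamma_k-\varepsilon$, and whether the Frobenius reduction costs $\sqrt2$ or $2$ — but all of this is absorbed by the (slightly generous) constant $2$ in the statement, and the large-$\varepsilon$ case is covered by the trivial projector estimate above; the rest is routine linear algebra.
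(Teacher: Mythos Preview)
Your proposal is correct and follows essentially the same route as the paper: Davis--Kahan $\sin\Theta$ to bound the principal angles, the identity $\|\hat P-P_k\|_F=\sqrt{2}\,\|\sin\Theta\|_F$, and the rank-$r$ passage $\|\sin\Theta\|_F\le\sqrt{r}\,\|\sin\Theta\|_{\mathrm{op}}$. The paper simply cites the Yu--Wang--Samworth variant to get $\|\sin\Theta\|_{\mathrm{op}}\le\varepsilon/\gamma_k$ with the \emph{unperturbed} gap directly (and then absorbs the $\sqrt{2}$ into the constant $2$), whereas you are more explicit about the perturbed-vs-unperturbed gap bookkeeping and offer the residual/Wedin form as a clean alternative; both land on the same bound.
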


\begin{proof}
Let $\Theta=\mathrm{diag}(\theta_1,\dots,\theta_r)$ be the diagonal matrix of principal
angles between the subspaces $\mathsf{span}(\hat U^{(r)})$ and $\mathsf{span}(U_k^{(r)})$ . The Davis--Kahan sin\,\(\Theta\) theorem \citep{yu2015useful}
gives the operator-norm bound
\[
\|\sin\Theta\|_{\mathrm{op}} \;\le\; \frac{\|\tilde G - S_k\|_{\mathrm{op}}}{\gamma_k}
\;\le\; \frac{\varepsilon}{\gamma_k}.
\]
Hence, by $\|\cdot\|_F\le \sqrt{r}\,\|\cdot\|_{\mathrm{op}}$,
\[
\|\sin\Theta\|_F \;\le\; \sqrt{r}\,\|\sin\Theta\|_{\mathrm{op}}
\;\le\; \frac{\sqrt{r}}{\gamma_k}\,\varepsilon.
\]
For rank-$r$ orthogonal projectors
$P_k$ and $\hat P$ associated with $U_k^{(r)}$ and $\hat U^{(r)}$,
\begin{align*}
\|\widehat P-P_k\|_F^2
& = \operatorname{tr}((\widehat P-P_k)^\top(\widehat P-P_k))
= \operatorname{tr}(\widehat P)+\operatorname{tr}(P_k)-2\operatorname{tr}(\widehat P P_k)\\
& =2r-2\operatorname{tr}((\hat U^{(r)})^\top U_k^{(r)} (U_k^{(r)})^\top (\hat U^{(r)}))
\end{align*}
The singular values of $\hat U^{(r)})^\top U_k^{(r)}$ are $\cos\theta_1,\dots,\cos\theta_r$, the cosines
of the principal angles, hence
\[ \operatorname{tr}((\hat U^{(r)})^\top U_k^{(r)} (U_k^{(r)})^\top (\hat U^{(r)})) = \|\hat U^{(r)})^\top U_k^{(r)}\|_F^2 = \sum_{i=1}^r\cos^2\theta_i.\]
Therefore
\[
\|\widehat P-P_k\|_F^2=2\sum_{i=1}^r(1-\cos^2\theta_i)=2\sum_{i=1}^r\sin^2\theta_i
=2\|\sin\Theta\|_F^2,
\]
then
\[\|\widehat P-P_k\|_F = \sqrt{2} \|\sin\Theta\|_F \leq \sqrt{2}\cdot \frac{\sqrt{r}}{\gamma_k} \varepsilon_n < \frac{2\sqrt{r}}{\gamma_k} \varepsilon \]
\end{proof}

\begin{theorem}[Semantic sufficiency of Gram projectors]\label{thm:gram-sufficiency-appendix}
Under the conditions stated in Theorem~\ref{thm:gram-sufficiency} of the main text,
the spectral‑overlap rule selects the correct class with margin $m \ge \Delta_P^2/4 > 0$.
\end{theorem}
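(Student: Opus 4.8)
The plan is to combine the Davis--Kahan bound of Lemma~\ref{lem:davis-kahan-frob} with a triangle inequality chaining the test projector $\hat P$ to the empirical prototype $\hat P_{k^\star}$, and then convert the resulting $F$-norm proximity into a strict overlap margin via the identity $\langle A,B\rangle_F = r - \tfrac12\|A-B\|_F^2$ valid for rank-$r$ orthogonal projectors.

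First I would bound $\|\hat P - \hat P_{k^\star}\|_F$. By Lemma~\ref{lem:davis-kahan-frob} applied with $S_{k^\star}$ and the hypothesis $\|\tilde G - S_{k^\star}\|_{\mathrm{op}} \le \varepsilon_n$, we get $\|\hat P - P_{k^\star}\|_F \le \tfrac{2\sqrt r}{\gamma_{k^\star}}\varepsilon_n$. Adding the prototype error $\delta_{\mathrm{proto}} = \|\hat P_{k^\star} - P_{k^\star}\|_F$ through the triangle inequality yields
\[
\|\hat P - \hat P_{k^\star}\|_F \;\le\; \frac{2\sqrt r}{\gamma_{k^\star}}\varepsilon_n + \delta_{\mathrm{proto}} \;=:\; \rho,
\]
and the concentration hypothesis~\eqref{eq:concentration} says exactly $\rho < \tfrac14\Delta_P$.

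Next I would turn this into the overlap comparison. For any rank-$r$ orthogonal projectors $A,B$ one has $\|A-B\|_F^2 = 2r - 2\langle A,B\rangle_F$, so $\langle A,B\rangle_F = r - \tfrac12\|A-B\|_F^2$; all the overlaps in~\eqref{eq:so-rule} live on this affine scale. Hence for any competitor $j \ne k^\star$,
\[
\langle \hat P,\hat P_{k^\star}\rangle_F - \langle \hat P,\hat P_j\rangle_F
= \tfrac12\big(\|\hat P-\hat P_j\|_F^2 - \|\hat P-\hat P_{k^\star}\|_F^2\big).
\]
I would lower-bound $\|\hat P - \hat P_j\|_F$ via the reverse triangle inequality: $\|\hat P-\hat P_j\|_F \ge \|\hat P_{k^\star}-\hat P_j\|_F - \|\hat P-\hat P_{k^\star}\|_F \ge \Delta_P - \rho$, using the definition of $\Delta_P$ as the minimum pairwise prototype separation. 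Since $\rho < \Delta_P/4$, we have $\Delta_P - \rho > \tfrac34\Delta_P > 0$, so $\|\hat P-\hat P_j\|_F^2 \ge (\Delta_P-\rho)^2$. Combining with $\|\hat P-\hat P_{k^\star}\|_F^2 \le \rho^2$, the margin is at least $\tfrac12\big((\Delta_P-\rho)^2 - \rho^2\big) = \tfrac12\Delta_P(\Delta_P - 2\rho) > \tfrac12\Delta_P \cdot \tfrac12\Delta_P = \tfrac{\Delta_P^2}{4}$, using $\rho < \Delta_P/4 \Rightarrow \Delta_P - 2\rho > \Delta_P/2$. This establishes both that the $\arg\max$ in~\eqref{eq:so-rule} is uniquely $k^\star$ and the quantitative margin~\eqref{eq:margin-cond}.

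The only mild subtlety — and the step I would be most careful about — is the bookkeeping with the two slightly different constants: Lemma~\ref{lem:davis-kahan-frob} actually proves the sharper $\sqrt2\,\sqrt r/\gamma_{k^\star}\cdot\varepsilon_n$ bound but states it as $2\sqrt r/\gamma_{k^\star}\cdot\varepsilon_n$, and the final margin computation gives $\tfrac12\Delta_P(\Delta_P-2\rho)$ which I must check is genuinely $\ge \Delta_P^2/4$ under the stated strict inequality rather than needing a weak one; the chain $\rho < \Delta_P/4$ makes $\Delta_P - 2\rho > \Delta_P/2$ strict, so the bound $m > \Delta_P^2/4$ holds, a fortiori $m \ge \Delta_P^2/4$. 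Everything else is the routine projector-geometry identity plus two triangle inequalities, so there is no real obstacle beyond this constant-tracking.
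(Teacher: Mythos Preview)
Your proposal is correct and follows essentially the same route as the paper: Davis--Kahan plus the prototype triangle inequality to bound $\rho=\|\hat P-\hat P_{k^\star}\|_F$, then the projector identity $\langle A,B\rangle_F=r-\tfrac12\|A-B\|_F^2$ together with the reverse triangle inequality to obtain the margin $\tfrac12\Delta_P(\Delta_P-2\rho)\ge \Delta_P^2/4$ under $\rho<\Delta_P/4$. Your constant-tracking remark about the $\sqrt{2}$ versus $2$ in Lemma~\ref{lem:davis-kahan-frob} and the strict-versus-weak inequality in the final step matches exactly how the paper handles these points.
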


\begin{proof}
By Lemma and the concentration assumption $\|\tilde G - S_{k^\star}\|_{\mathrm{op}} \le \varepsilon_n$,
\[
\| \hat P - P_{k^\star} \|_F \le \frac{2\sqrt{r}}{\gamma_{k^\star}} \varepsilon_n.
\]
Triangle inequality then gives
\[
\|\hat P - \hat P_{k^\star} \|_F \le \| \hat P - P_{k^\star} \|_F + \| P_{k^\star} - \hat P_{k^\star} \|_F \le \frac{2\sqrt{r}}{\gamma_{k^\star}} \varepsilon_n + \delta_{\mathrm{proto}} =: \rho.
\]
Recall for rank-$r$ orthogonal projectors $A, B$ we have $\|A\|_F^2 = \|B\|_F^2 = r$ and
\[
\langle A, B\rangle_F = r - \frac{1}{2}\|A - B\|_F^2.
\]
Thus, for any $j \neq k^\star$,
\[
\langle \hat P, \hat P_{k^\star}\rangle_F - \langle \hat P, \hat P_j\rangle_F = \frac{1}{2}\big( \|\hat P - \hat P_j\|_F^2 - \|\hat P - \hat P_{k^\star}\|_F^2 \big).
\]
By the reverse triangle inequality $\|\hat P - \hat P_j\|_F \ge \|\hat P_j - \hat P_{k^\star}\|_F - \|\hat P - \hat P_{k^\star}\|_F$. Let $\Delta_P = \min_{j\neq k^\star} \|\hat P_j - \hat P_{k^\star}\|_F$. Then for every $j \neq k^\star$,
\[
\langle \hat P, \hat P_{k^\star}\rangle_F - \langle \hat P, \hat P_j\rangle_F
\ge \frac{1}{2}\big( \Delta_P - \rho \big)^2 - \frac{1}{2}\rho^2
= \frac{1}{2}\big( \Delta_P^2 - 2\Delta_P \rho \big)
= \frac{\Delta_P}{2}(\Delta_P - 2\rho).
\]
Consequently, if $\rho < \tfrac{1}{2}\Delta_P$, which is assured by condition equation 3.4, then every right-hand side is positive and hence $\langle \hat P, \hat P_{k^\star}\rangle_F > \langle \hat P, \hat P_j\rangle_F$ for all $j \neq k^\star$, so $\hat k = k^\star$. Finally, if the stronger condition equation 3.4 holds then $\rho \le \tfrac{1}{4}\Delta_P$, so $\Delta_P - 2\rho \ge \tfrac{1}{2}\Delta_P$ and therefore the overlap margin satisfies
\[
m = \min_{j\neq k^\star}\big\{ \langle \hat P, \hat P_{k^\star}\rangle_F - \langle \hat P, \hat P_j\rangle_F \big\} \ge \frac{\Delta_P}{2}\cdot \frac{\Delta_P}{2} = \frac{\Delta_P^2}{4}.
\]
This completes the proof.
\end{proof}

\subsubsection{Spectral duality (item vs.\ feature space)}\label{app:sec3-duality}

\begin{proposition}[Spectral duality]
\label{prop:spectral-duality}
Let $V\in\mathbb{R}^{n\times d}$, $H:=I_n-\tfrac{1}{n}\mathbf 1\mathbf 1^\top$, and $Z:=HV$. 
Let $Z=U\Sigma W^\top$ be a compact SVD with $U\in\mathbb{R}^{n\times r}$, 
$\Sigma\in\mathbb{R}^{r\times r}$, $W\in\mathbb{R}^{d\times r}$. 
Define $S:=ZZ^\top$, $C:=Z^\top Z$, and projectors
\[
P_S:=U_rU_r^\top,\qquad P_C:=W_rW_r^\top.
\]
Then $S$ and $C$ share the same nonzero singular spectrum, and
\[
\langle P_S^{(a)},P_S^{(b)}\rangle_{\mathrm{F}}
=\big\|\,U_r^{(a)\top}U_r^{(b)}\,\big\|_{\mathrm{F}}^2,\qquad
\langle P_C^{(a)},P_C^{(b)}\rangle_{\mathrm{F}}
=\big\|\,W_r^{(a)\top}W_r^{(b)}\,\big\|_{\mathrm{F}}^2.
\]
Here $(a)$ and $(b)$ index two different batches of centered data, each with its own SVD and associated projector.
\end{proposition}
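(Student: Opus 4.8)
The plan is to reduce every assertion to a single compact SVD of $Z$ together with one elementary trace identity; no Davis--Kahan or perturbation input is needed here.

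First I would write the compact SVD $Z = U\Sigma W^\top$ with orthonormal factors $U^\top U = W^\top W = I_r$ and $\Sigma = \mathrm{diag}(\sigma_1,\dots,\sigma_r)$, $\sigma_i>0$, and substitute it directly into $S := ZZ^\top$ and $C := Z^\top Z$. Using $W^\top W = I_r$ and $U^\top U = I_r$ these collapse to $S = U\Sigma^2 U^\top$ and $C = W\Sigma^2 W^\top$, which are genuine eigendecompositions. This immediately yields the first claim: $S$ and $C$ share the nonzero spectrum $\{\sigma_1^2,\dots,\sigma_r^2\}$, the columns of $U$ are the leading eigenvectors of $S$, those of $W$ the leading eigenvectors of $C$, and hence $P_S = U_rU_r^\top$ and $P_C = W_rW_r^\top$ are the spectral projectors onto the respective rank-$r$ eigenspaces. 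I would also record in passing that $H = H^\top = H^2$ gives $S = HVV^\top H = HGH = \tilde G$ and $C = V^\top H V$, which is exactly what links the item-space statement to the feature-space formulation $C := V^\top H V$ used in the footnote.

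Second, for the two overlap identities I would first prove the general fact that for any matrix $A$ with orthonormal columns and $P := AA^\top$, one has $\langle P^{(a)},P^{(b)}\rangle_F = \|A^{(a)\top}A^{(b)}\|_F^2$. This is one line: expand $\langle P^{(a)},P^{(b)}\rangle_F = \Tr(P^{(a)}P^{(b)})$, substitute $P = AA^\top$, and rearrange factors by cyclicity of the trace to obtain $\Tr(M^\top M) = \|M\|_F^2$ with $M := A^{(a)\top}A^{(b)}$. Specializing $A = U_r$ gives the $P_S$ identity and $A = W_r$ gives the $P_C$ identity.

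The one point I would be careful about — and essentially the only subtlety — is well-definedness of the ``leading rank-$r$'' objects across two batches when $Z$ has singular values of multiplicity: the factors $U_r, W_r$ are then pinned down only up to an orthogonal rotation inside each eigenspace. I would note that the projectors $P_S, P_C$ are nonetheless unique, and that both $\|U_r^{(a)\top}U_r^{(b)}\|_F^2$ and $\|W_r^{(a)\top}W_r^{(b)}\|_F^2$ are invariant under such intra-eigenspace rotations, so each identity is unambiguous; an eigengap at rank $r$ (already assumed throughout this section) makes this automatic. Everything else is mechanical.
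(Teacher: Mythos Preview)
Your proposal is correct and follows essentially the same route as the paper: both arguments rest on the compact SVD of $Z$ to identify the shared nonzero spectrum, and both obtain the overlap identities from the trace formula $\langle P^{(a)},P^{(b)}\rangle_F=\Tr(P^{(a)}P^{(b)})$. Your write-up is in fact more explicit and self-contained than the paper's---the paper records the intertwining relations $U_r=ZW_r\Sigma_r^{-1}$, $W_r=Z^\top U_r\Sigma_r^{-1}$ and then appeals to ``standard properties of principal angles,'' whereas you substitute directly into $S$ and $C$ and carry out the one-line trace computation; your added remark on rotational ambiguity under repeated singular values is a welcome clarification that the paper omits.
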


\begin{proof}
Since $ZW_r=U_r\Sigma_r$ and $Z^\top U_r=W_r\Sigma_r$, 
it follows that $U_r=ZW_r\Sigma_r^{-1}$ and $W_r=Z^\top U_r\Sigma_r^{-1}$, 
which yields the claimed identities. The overlap formulas follow from 
$\langle A,B\rangle_{\mathrm{F}}=\mathrm{tr}(A^\top B)$ 
and standard properties of principal angles.
\end{proof}

\begin{corollary}[Feature-space sufficiency]
\label{cor:feature-sufficiency}
Let $\widetilde C:=Z^\top Z$ and $C_k:=\mathbb{E}[\widetilde C\mid k]$ 
with eigengap $\gamma_k^{(C)}=\lambda_r(C_k)-\lambda_{r+1}(C_k)>0$. 
Let $Q_k$ be the rank-$r$ projector of $C_k$, 
$\widehat Q_k$ the prototype projector from calibration, 
and $\widehat Q$ the test projector from $\widetilde C$. 
Define $\Delta_Q=\min_{j\ne \ell}\|\widehat Q_j-\widehat Q_\ell\|_{\mathrm{F}}$. 
If
\[
\|\widetilde C-C_{k^\star}\|_{\mathrm{op}}\le \varepsilon_n,\qquad
\frac{2\sqrt{r}}{\gamma_{k^\star}^{(C)}}\,\varepsilon_n + \|\widehat Q_{k^\star}-Q_{k^\star}\|_{\mathrm{F}}
< \tfrac14\,\Delta_Q,
\]
then
\[
\widehat k = \arg\max_{k}\,\langle \widehat Q,\widehat Q_k\rangle_{\mathrm{F}}
\]
recovers the correct class $\widehat k=k^\star$ with overlap margin at least $\Delta_Q^2/4$.
\end{corollary}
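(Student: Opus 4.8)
The plan is to transcribe the proof of Theorem~\ref{thm:gram-sufficiency-appendix} into feature space, substituting every item-space object by its $C$-space analogue: $\tilde G \mapsto \widetilde C$, $S_{k^\star}\mapsto C_{k^\star}$, $\gamma_{k^\star}\mapsto\gamma_{k^\star}^{(C)}$, $P_{k^\star}\mapsto Q_{k^\star}$, $\hat P_k\mapsto\widehat Q_k$, $\hat P\mapsto\widehat Q$, $\delta_{\mathrm{proto}}\mapsto\|\widehat Q_{k^\star}-Q_{k^\star}\|_F$, and $\Delta_P\mapsto\Delta_Q$. The ambient dimension changes from $n$ to $d$, but every estimate in that argument involves only the rank $r$ (through Davis--Kahan and the rank-$r$ projector identities), so the substitution is harmless. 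I would invoke Proposition~\ref{prop:spectral-duality} only for interpretive force: for each sampled batch $Z$, $ZZ^\top$ and $Z^\top Z$ carry the same nonzero spectrum and mutually determined top-$r$ subspaces, so the feature-space GPSO rule is the numerical dual of the item-space one and the hypotheses here are the natural mirror of those in the main theorem.

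Concretely, I would first apply Lemma~\ref{lem:davis-kahan-frob} to the symmetric pair $(\widetilde C, C_{k^\star})$ --- the lemma is stated for an arbitrary symmetric perturbation with a rank-$r$ eigengap, so nothing Gram-specific is needed --- to obtain $\|\widehat Q - Q_{k^\star}\|_F \le (2\sqrt r/\gamma_{k^\star}^{(C)})\,\varepsilon_n$. The triangle inequality then gives $\|\widehat Q - \widehat Q_{k^\star}\|_F \le \rho$ with $\rho := (2\sqrt r/\gamma_{k^\star}^{(C)})\,\varepsilon_n + \|\widehat Q_{k^\star}-Q_{k^\star}\|_F$, and the standing hypothesis forces $\rho < \tfrac14\Delta_Q$. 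Using the rank-$r$ projector identity $\langle A,B\rangle_F = r - \tfrac12\|A-B\|_F^2$, for every $j\ne k^\star$ the overlap gap equals $\tfrac12(\|\widehat Q-\widehat Q_j\|_F^2 - \|\widehat Q-\widehat Q_{k^\star}\|_F^2)$; bounding $\|\widehat Q-\widehat Q_j\|_F \ge \Delta_Q-\rho$ by the reverse triangle inequality and $\|\widehat Q-\widehat Q_{k^\star}\|_F\le\rho$ yields a gap of at least $\tfrac{\Delta_Q}{2}(\Delta_Q-2\rho)$, which under $\rho\le\tfrac14\Delta_Q$ is at least $\Delta_Q^2/4 > 0$. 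Hence $\widehat k = k^\star$ with margin $m \ge \Delta_Q^2/4$, as claimed.

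I do not expect a genuine obstacle here --- the argument is mechanical bookkeeping --- and the only point worth a sentence in the writeup is the observation that Lemma~\ref{lem:davis-kahan-frob} makes no use of the item-space structure, so it applies verbatim to $\widetilde C$ with its own eigengap $\gamma_{k^\star}^{(C)}$. An even shorter route, if one prefers, is to use Proposition~\ref{prop:spectral-duality} to transport the test-versus-prototype Frobenius overlaps between the two spaces and then quote Theorem~\ref{thm:gram-sufficiency-appendix} directly; but the direct transcription is cleaner and fully self-contained, so that is the route I would take.
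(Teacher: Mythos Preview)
Your proposal is correct and matches the paper's approach exactly: the paper's proof is a one-line remark that the argument is ``same as the proof of Theorem~\ref{thm:gram-sufficiency-appendix}, with $ZZ^\top$ replaced by $Z^\top Z$ and left singular subspaces replaced by right singular subspaces,'' which is precisely the substitution scheme you lay out and carry through in detail. Your explicit walkthrough of the Davis--Kahan step, the triangle inequality, and the projector identity is a faithful unpacking of that terse reference.
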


\begin{proof}
Same as the proof of Theorem~2.1, with $ZZ^\top$ replaced by $Z^\top Z$ 
and left singular subspaces replaced by right singular subspaces.
\end{proof}

\subsubsection{Interaction‑energy range (quantification)}
\begin{theorem}[Unit‑norm interaction–energy bound]\label{thm:unit-norm-bound}
If $\|v_i\|_2=1$ for all $i$, then $1\le e(i;G)\le\sqrt{n}$, with $e=\sqrt{n}$ when all $v_j$ align with $v_i$
and $e=1$ when $v_i\perp v_{j\neq i}$.
\end{theorem}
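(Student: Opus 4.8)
The plan is to work directly from the definition $e(i;G)=\|G_{:,i}\|_2=\|Vv_i\|_2$ and expand the squared norm entry by entry. Since the $j$-th coordinate of the column $G_{:,i}$ is $v_j^\top v_i$, this gives $e(i;G)^2=\sum_{j=1}^n (v_j^\top v_i)^2$. Invoking $\|v_i\|_2=\|v_j\|_2=1$, I would write $v_j^\top v_i=\cos\theta_{ij}$, so that $e(i;G)^2=\sum_{j=1}^n\cos^2\theta_{ij}=1+\sum_{j\neq i}\cos^2\theta_{ij}$, isolating the $j=i$ term $\cos^2\theta_{ii}=1$.

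For the lower bound, I would note that each summand $\cos^2\theta_{ij}$ is nonnegative, so $e(i;G)^2\ge 1$ and hence $e(i;G)\ge 1$; equality forces $\cos\theta_{ij}=0$ for every $j\neq i$, i.e.\ $v_i\perp v_{j}$ for all $j\neq i$. For the upper bound, Cauchy--Schwarz with the unit-norm constraint yields $\cos^2\theta_{ij}=(v_j^\top v_i)^2\le\|v_j\|_2^2\,\|v_i\|_2^2=1$ for each $j$, whence $e(i;G)^2\le n$ and $e(i;G)\le\sqrt n$; equality requires $\cos^2\theta_{ij}=1$ for all $j$, i.e.\ $v_j=\pm v_i$ for every $j$ --- the ``aligned'' configuration, where both alignment and anti-alignment saturate the bound. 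Combining the two bounds gives the stated range and both extremal characterizations.

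I do not expect any genuine obstacle here; the one point that needs care is stating the equality cases precisely --- in particular that the maximum is attained not only when all $v_j=v_i$ but whenever each $v_j$ equals $\pm v_i$, matching the remark in the main text that alignment and anti-alignment both contribute to the energy. As an alternative route to the upper bound one could use $\|Vv_i\|_2\le\|V\|_{\mathrm{op}}\le\|V\|_F=\sqrt n$, but the coordinatewise computation is cleaner and delivers the equality cases for free.
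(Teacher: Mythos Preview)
Your proposal is correct and follows essentially the same approach as the paper's proof: expand $e(i;G)^2=\sum_{j}\langle v_i,v_j\rangle^2$, use the $j=i$ term for the lower bound and $|\langle v_i,v_j\rangle|\le 1$ for the upper bound, then read off the equality configurations. Your treatment of the equality cases is in fact slightly more explicit than the paper's, correctly noting that the upper bound is saturated by $v_j=\pm v_i$ rather than only $v_j=v_i$.
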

\begin{proof}
With $\|v_j\|_2 = 1$, the $j = i$ term in equation 3.1 gives $f(i; G)^2 \ge 1$. Since $|\langle v_i , v_j \rangle| \le 1$,
\[
f(i; G)^2 = \sum_{j=1}^n \langle v_i , v_j \rangle^2 \le n,
\]
hence $f(i; G) \le \sqrt{n}$. Both bounds are attainable by orthogonality (lower) and equality (upper) configurations.
\end{proof}

\subsection{Section 3: Algorithm and Results}
\label{app:sec3-alg}

\subsubsection{Gram–Projector Spectral‑Overlap (GPSO) Classifier — Intuition \& Implementation}
\label{app:gpso-alg}

Given a small batch of responses, we embed each answer head (unit‑norm) and form a Gram geometry that is (i) permutation‑invariant over items, (ii) label‑free at test time, and (iii) stable under model swaps. The decision lives in the leading \emph{Gram subspace}: we compare the test batch’s top‑$r$ projector to calibrated prototype projectors via \emph{spectral overlap}. Under a concentration assumption and prototype separation, the overlap rule recovers the correct class with a positive margin (Theorem \ref{thm:gram-sufficiency} main text).\footnote{Feature/item‑space duality ensures the same procedure works in feature space with $C=Z^\top Z$ (Proposition~A.3).}

\vspace{0.25em}
\noindent\textbf{Notation.}
Unit‑norm embeddings $v_i=\psi(y_i)\in\mathbb{R}^d$, batch matrix $V\in\mathbb{R}^{n\times d}$, centering 
\[H=I_n-\tfrac{1}{n}\mathbf 1\mathbf 1^\top, Z:=HV,\]
item‑space Gram $\tilde G=ZZ^\top=H(VV^\top)H$, feature‑space scatter $\widetilde C=Z^\top Z$.\\
For class $k$, the projector is $P_k$ (item-space) or $Q_k$ (feature-space), with prototypes $\hat P$ or $\hat Q$; a test batch yields $\hat P$ or $\hat Q$. The spectral‑overlap rule chooses \[\hat k=\arg\max_k \langle \hat P,\hat P_k\rangle_F\quad \text{or equivalently with feature-space projectors}\]


\paragraph{Centering and $L^2$.}
L2 normalization removes scale/length bias in encodings; centering ($H$) removes the rank‑one mean component so that leading directions represent \emph{consensus deviations} rather than the global mean. In practice, centering enlarges prototype separation $\Delta_P$ but can also change the best operating rank $r$ and the unfactual class geometry; we therefore report both centered and non‑centered pipelines (cf.\ ablations A/B/C/D below).




We implement in \emph{feature space} by default (numerically cheaper when $d\ll n$):
\begin{align*}
\widetilde C &= 
\begin{cases}
V^\top H V & \text{(centered)}\\
V^\top V   & \text{(no centering)}
\end{cases},
\qquad
\hat Q=\text{proj}_r(\widetilde C),
\qquad
s_k=\langle \hat Q,\bar Q_k\rangle_F.
\end{align*}
\emph{Rank $r$ selection by eigengap.} On2 class‑average scatter (or bootstrap average) we pick $r=\arg\max_j (\lambda_j-\lambda_{j+1})$ subject to $1\le r\le r_{\max}$, then re‑project any averaged projector back to rank $r$.

\begin{algorithm}[H]
\caption{GPSO (Calibration + Inference; feature‑space implementation)}
\label{alg:gpso-detailed}
\begin{algorithmic}[1]
\State \textbf{Inputs:} Embeddings $V\in\mathbb{R}^{n\times d}$ (rows unit‑norm), class label $\in\{\textsf{good},\textsf{bad}\}$, prototypes $\{\bar Q_k\}$, rank cap $r_{\max}$.
\State \textbf{Preprocess:} Optionally center with $H=I-\tfrac{1}{n}\mathbf 1\mathbf 1^\top$; set $\widetilde C\gets V^\top H V$ (or $V^\top V$ if no centering).
\State \textbf{Rank choice:} On calibration, average class scatters to $C^{\text{bar}}_k$ (or bootstrap‑average); choose $r_k$ by eigengap; set final $r\gets \min_k r_k$.
\State \textbf{Prototype(s):} For each class $k$, collect projectors $Q^{(b)}_k$ from calibration or bootstrap replicates, average $\bar Q^{\text{raw}}_k\gets \frac{1}{B}\sum_b Q^{(b)}_k$, then project back to rank $r$: $\bar Q_k\gets\text{proj}_r(\bar Q^{\text{raw}}_k)$.\label{alg:reproject}
\State \textbf{Test projector:} $\hat Q\gets\text{proj}_r(\widetilde C)$.
\State \textbf{Scores \& decision:} $s_k\gets \langle \hat Q,\bar Q_k\rangle_F$, $\hat k\gets\arg\max_k s_k$, $m\gets s_{\hat k}-\max_{j\ne \hat k}s_j$.
\State \textbf{Diagnostics (logged):} prototype separation $\Delta_P=\|\bar Q_{\textsf{good}}-\bar Q_{\textsf{bad}}\|_F$; concentration $\varepsilon=\|C_{\text{test}}-C^{\text{bar}}_{k}\|_{\mathrm{op}}$; eigengap $\gamma_k$ at $r$; prototype dispersion $\delta_{\text{proto}}=\mathrm{median}_b\|Q^{(b)}_k-\bar Q_k\|_F$; margin‑condition flag $[(2\sqrt{r}/\gamma_k)\varepsilon+\delta_{\text{proto}}<\Delta_P/4]$; normalized margin $m/r$.
\end{algorithmic}
\end{algorithm}

\textbf{Within‑question (stratified).} For each question $q$, split \textsf{good}/\textsf{bad} into train/test (ratio $0.6/0.4$ by default), build prototypes on train with bootstrap replicates ($B=8$), and test on the held‑out items. Repeat $n_{\text{splits}}=5$ times per $q$ with a fixed seed; report per‑split and per‑question means.

\paragraph{Pipelines (ablations).}
We compare three scatter constructions that isolate the role of centering and $L^2$:
\begin{enumerate}[label=(\Alph*),leftmargin=1.5em]
\item \textbf{L2+centered} (\emph{Good Gram}): $V$ row‑normalized, $\widetilde C=V^\top H V$.
\item \textbf{L2/no‑center}: $V$ row‑normalized, $\widetilde C=V^\top V$.
\item \textbf{no‑L2+center}: raw rows, $\widetilde C=V^\top H V$.
\item \textbf{no‑L2+no-center}: raw rows, $\widetilde C=V^\top V$.
\end{enumerate}

\medskip
\noindent\emph{Remark (duality).} Item‑space GPSO with $\tilde G$ and feature‑space GPSO with $\widetilde C$ are spectrally equivalent; our implementation adopts $\widetilde C$ for efficiency while the theory in §3 and App.~A.1 is stated in $\tilde G$ for clarity (Proposition~A.3). 

\subsubsection{Experiment settings and compact results (factual vs.\ unfactual QA; grouped CV)}
\label{app:gpso-exps}

Each row is an answer head with fields \texttt{question}, \texttt{text}, and \texttt{forced\_generation} (Boolean). We treat \textsf{good} $\equiv$ (\texttt{forced\_generation}=\texttt{False}) and \textsf{bad} $\equiv$ (\texttt{True}). Embeddings use \texttt{all‑MiniLM‑L6‑v2} with L2 row normalization unless disabled by the pipeline. All runs are seeded and grouped by question to prevent leakage.

For each fold we record: predicted class for held‑out \textsf{good}/\textsf{bad} batches; unnormalized and normalized margins ($m$, $m/r$); rank $r$; prototype separation $\Delta_P$; concentration $\varepsilon$ (spectral norm $\|\widetilde C_{\text{test}}-C^{\text{bar}}_k\|_{\mathrm{op}}$); eigengap $\gamma_k$; prototype dispersion $\delta_{\text{proto}}$; and a Boolean \emph{margin‑condition pass} flag
\[
\underbrace{\frac{2\sqrt{r}}{\gamma_k}\,\varepsilon + \delta_{\text{proto}}}_{\text{LHS}}
\;<\;
\underbrace{\tfrac{1}{4}\Delta_P}_{\text{RHS}}.
\]
Global summaries report macro and per‑class accuracies, mean normalized margins, mean $\Delta_P$, average rank, and the fraction of folds that satisfy the margin condition.

\emph{Within‑question} uses $60\%/40\%$ stratified train/test with $n_{\text{splits}}=5$ and bootstrap $B=8$ for prototype stability; \emph{LOO} uses other questions as calibration pools (skipping low‑count classes) and tests on the held‑out question.

\begin{table}[H]
\centering
\caption{GPSO across datasets: macro/class accuracies and prototype distance $\Delta_P$.}
\label{tab:gpso-asqa-hotpot-nq}
\small
\begin{tabular}{llrrrr}
\toprule
Dataset & Pipeline & Macro acc & Acc (factual) & Acc (unfactual) & $\Delta_P$ \\
\midrule
ASQA     & (A) L2+centered     & 0.856 & 0.956 & 0.756 & 1.427 \\
ASQA     & (B) L2/no-center    & \textbf{0.972} & 1.000 & 0.944 & 1.412 \\
ASQA     & (C) no-L2+center    & 0.922 & 0.956 & 0.889 & 1.442 \\
ASQA     & (D) no-L2/no-center & 0.967 & 1.000 & 0.933 & 1.412 \\
\midrule
HotpotQA & (A) L2+centered     & 0.924 & 0.939 & 0.909 & 1.430 \\
HotpotQA & (B) L2/no-center    & \textbf{0.977} & 1.000 & 0.955 & 1.160 \\
HotpotQA & (C) no-L2+center    & 0.955 & 0.955 & 0.955 & 1.415 \\
HotpotQA & (D) no-L2/no-center & 0.955 & 1.000 & 0.909 & 1.161 \\
\midrule
NQ-Open  & (A) L2+centered     & 0.947 & 1.000 & 0.894 & 1.463 \\
NQ-Open  & (B) L2/no-center    & \textbf{0.970} & 1.000 & 0.939 & 0.932 \\
NQ-Open  & (C) no-L2+center    & 0.955 & 0.939 & 0.970 & 1.451 \\
NQ-Open  & (D) no-L2/no-center & \textbf{0.970} & 0.985 & 0.955 & 0.924 \\
\bottomrule
\end{tabular}
\end{table}

\paragraph{Findings.}
(i) \textbf{Best macro accuracy} is consistently achieved by L2/no‑center (B), which preserves length‑free directions while letting the mean component contribute discriminative variance in this binary factuality task. (ii) \textbf{Centering increases prototype separation} ($\Delta_P\!\approx\!1.36$ for A/C) but trades off with unfactual accuracy (bad‑class geometry differs once the mean is removed). (iii) The \textbf{margin condition is auditable}: folds with larger $\Delta_P$ and stable prototypes (small $\delta_{\text{proto}}$) show higher normalized margins and a higher fraction of passes.\footnote{These diagnostics mirror Theorem~3.1’s sufficient condition and are logged by the evaluator for each fold.}

\paragraph{Interpretation.}
Pipelines with centering and L2 (A/C) align with the spectral theory: they enlarge prototype separation $\Delta_P$ and yield cleaner subspace structures by removing length and mean effects. Yet, in practice, preserving the mean (B/D) consistently improves the accuracy, suggesting that the mean embedding itself carries label-related signals. This tension indicates that centering may enhance interpretability and theoretical guarantee, while non-centering may better capture dataset-specific features.

\medskip
Compact algorithm and summary appear in Appendix A.1.4–A.1.5 of the paper; §3 presents the sufficiency theorem and its margin bound, and §4 connects the Gram geometry to a 1‑D consensus score used by CRC.

\newpage
\subsection{Proofs for Section ~\ref{sec:carc-bbcrc}}
\NewAndZDist*
\begin{proof}
Equation~\eqref{eq:lemma1-Ynew} follows immediately because
$Y_{\mathrm{new}}$ is independent of
$\bigl\{Y^{g}\bigr\}_{g=1}^{G}$.  
For~\eqref{eq:lemma1-Z} we need only verify that the marginal
law of $Z_{j}^{\,G+1}$ equals $\mathbb{P}_{Y}$.  
For any $y\in\mathbb{R}$,
\begin{align*}
\mathbb{P}\!\bigl(Z_{j}^{\,G+1}\le y\bigr)
&= \mathbb{E}\!\bigl[\mathbf 1\{Z_{j}^{\,G+1}\le y\}\bigr] \\
&= \sum_{k=1}^{K} g_{k}\, \mathbb{P}\bigl(Y_{k}^{\,G+1}\le y\bigr) \\
&= \mathbb{P}\bigl(Y \le y\bigr),
\end{align*}
here $g_k$ stand for the probability $\mathbb{P}(Z_{j}^{G+1}=Y_{k}^{G+1})$. Thus $Z_{j}^{\,G+1}\sim\mathbb{P}_{Y}$, completing the proof.
\end{proof}

\BBCRC*
\begin{proof}
First relate the fresh outcome $Y_{\mathrm{new}}$ to the next‑round
pseudo‑outcomes $\{Z_{j}^{G+1}\}_{j=1}^{K}$:
\begin{align*}
\mathbb{E}\!\bigl[L(Y_{\mathrm{new}},\hat\lambda_Z)\bigr]
  &= \mathbb{E}\Bigl[
       \mathbb{E}\bigl[
         L(Y_{\mathrm{new}},\hat\lambda_Z)\!
         \mid\!\{Z_{j}^{g}\}_{j,g=1}^{K,G}
       \bigr]
     \Bigr] \\
  &= \mathbb{E}\Bigl[
       \mathbb{E}\bigl[
        L(Z_{j'}^{G+1},\hat\lambda_Z)\!
         \mid\!\{Z_{j}^{g}\}_{j,g=1}^{K,G}
       \bigr]
     \Bigr],\;(j'=1,\dots,K) \text{(By Lemma~\ref{lem:new_and_z_dist})}\\
  &= \mathbb{E}\Bigl[
       \mathbb{E}\bigl[
        \tfrac{1}{K}\!\sum_{j=1}^{K}
       L(Z_{j}^{G+1},\hat\lambda_Z)\!
         \mid\!\{Z_{j}^{g}\}_{j,g=1}^{K,G}
       \bigr]
     \Bigr] \\
  &=
     \mathbb{E}\Bigl[
       \tfrac{1}{K}\!\sum_{j=1}^{K}
       L(Z_{j}^{G+1},\hat\lambda_Z)
     \Bigr].
\end{align*}
Define
\[
  \hat\lambda_Z'
  = \inf\Bigl\{\lambda :
      \tfrac{1}{(G+1)K}\sum_{g=1}^{G+1}\sum_{j=1}^{K}
      L(\mathbf Z^{g}_{j},\lambda)\le\alpha
    \Bigr\}.
\]
then $\hat \lambda_Z\ge\hat \lambda_Z'$, thus 
\[  L(Z_{j}^{G+1},\hat\lambda_Z)
  \;\le\; L(Z_{j}^{G+1},\hat\lambda_Z'),\;(j=1,\dots,K)
\]
because $L(\cdot,\lambda)$ is decreasing with respect to $\lambda$. Hence
\[
 \mathbb{E}\Bigl[
       \tfrac{1}{K}\!\sum_{j=1}^{K}
       L(Z_{j}^{G+1},\hat\lambda_Z)
     \Bigr]
  \;\le\;
  \mathbb{E}\Bigl[
       \tfrac{1}{K}\!\sum_{j=1}^{K}
       L(Z_{j}^{G+1},\hat\lambda_Z')
     \Bigr].
\]
Exchangeability of the $G+1$ blocks implies
\[
\begin{aligned}
\mathbb{E}\bigl[
    \tfrac{1}{K}\sum_{j=1}^{K}
    L(Z_{j}^{G+1},\hat\lambda_Z')
  \bigr]
&= \mathbb{E}\Bigl[
       \mathbb{E}\bigl[
        \tfrac{1}{K}\!\sum_{j=1}^{K}
       L(Z_{j}^{G+1},\hat\lambda_Z')\!
         \mid\!\{Z_{j}^{g}\}_{j,g=1}^{K,G+1}
       \bigr]
     \Bigr]\\
&= \mathbb{E}\Bigl[
       \mathbb{E}\bigl[\tfrac{1}{G+1}\sum_{g=1}^{G+1}
      \tfrac{1}{K}\sum_{j=1}^{K}
      L(Z_{j}^{g},\hat\lambda_Z')\!
         \mid\!\{Z_{j}^{g}\}_{j,g=1}^{K,G+1}
       \bigr]
     \Bigr]\;\\
     &\text{(By exchangeability of $\{\{Z_{j}^{g}\}_{j=1}^{K}\}_{g=1}^{G+1}$)}\\
&= \mathbb{E}\bigl[
      \tfrac{1}{G+1}\sum_{g=1}^{G+1}
      \tfrac{1}{K}\sum_{j=1}^{K}
      L(Z_{j}^{g},\hat\lambda_Z')
    \bigr]\\
&  \le \alpha,\;\text{(By definition of $\hat \lambda_Z'$)}
\end{aligned}
\]
establishing the desired risk bound.
\end{proof}

\RBWACRC*
\begin{proof}
Imagine that we have pseudo-batch $B_{G+1}=\{Y_{G+1,i}\}_{i=1}^I$ and pseudo-sample $p_{G+1}\sim\mathcal P_{\mathcal S}$ which is independent of $\{p_g\}_{g=1}^{G}$ and $\{B_g\}_{g=1}^{G+1}$. Now

\begin{align*}
\mathbb{E}\!\bigl[L(Y_{\mathrm{new}},\hat\lambda_p)\bigr]
  &= \mathbb{E}\Bigl[
       \mathbb{E}\bigl[
         L(Y_{G+1,1},\hat\lambda_p)\!
         \mid\!\{(B_g,p_g)\}_{g=1}^{G},p_{G+1}
       \bigr]
     \Bigr] \\
  &= \mathbb{E}\Bigl[
       \mathbb{E}\bigl[
        L(Y_{G+1,i},\hat\lambda_p)\!
         \mid\!\{(B_g,p_g)\}_{g=1}^{G},p_{G+1}
       \bigr]
     \Bigr],\;(i=1,\dots,I) \\
     &\text{ (By exchangeability of $\{Y_{G+1,i}\}_{i=1}^I$)}\\
  &= \mathbb{E}\Bigl[
       \mathbb{E}\bigl[
        \sum\limits_{i=1}^{I}
       p_{G+1,i}L(Y_{G+1,i},\hat\lambda_p)\!
         \mid\!\{(B_g,p_g)\}_{g=1}^{G},p_{G+1}
       \bigr]\\
       &\text{ (This line follows from $\sum\limits_{i=1}^I p_{G+1,i}=1$)}
     \Bigr] \\
  &=
     \mathbb{E}\Bigl[
      \sum\limits_{i=1}^{I}
      p_{G+1,i}L(Y_{G+1,i},\hat\lambda_p)
     \Bigr] \\
  &= \mathbb{E}\Bigl[L_{G+1}(\hat \lambda_p)\Bigr]
\end{align*}
Notice that $\{L_g\}_{g=1}^{G+1}$ satisfy the assumption of Theorem 1 in ``\textit{Conformal Risk Control}", thus our theorem holds.

Define
\[
  \hat\lambda_p'
  = \inf\Bigl\{\lambda :
      \tfrac{1}{G+1}\sum_{g=1}^{G+1}
      L_g(\lambda)\le\alpha
    \Bigr\}.
\]
then $\hat \lambda_p\ge\hat \lambda_p'$, thus 
\[  L(Y_{G+1,i},\hat\lambda_p)
  \;\le\; L(Y_{G+1,i},\hat\lambda_p'),\;(i=1,2,\dots,I)
\]
because $L(\cdot,\lambda)$ is decreasing with respect to $\lambda$. Hence
\[
 \mathbb{E}\Bigl[L_{G+1}(\hat \lambda_p)\Bigr]
  \;\le\;
  \mathbb{E}\Bigl[L_{G+1}(\hat \lambda_p')\Bigr].
\]
Exchangeability of the $G+1$ blocks implies
\[
\begin{aligned}
\mathbb{E}\bigl[
    L_{G+1}(\hat \lambda_p')
  \bigr]
&= \mathbb{E}\Bigl[
       \mathbb{E}\Bigl[L_{G+1}(\hat \lambda_p')\!
         \mid\!\{(B_g,p_g)\}_{g=1}^{G+1}
       \bigr]
     \Bigr]\\
&= \mathbb{E}\Bigl[
       \mathbb{E}\bigl[\tfrac{1}{G+1}\sum_{g=1}^{G+1}
      L_{g}(\hat \lambda_p')\!
         \mid\!\{(B_g,p_g)\}_{g=1}^{G+1}
       \bigr]
     \Bigr]\;\text{(By exchangeability of $\{(B_g,p_g)\}_{g=1}^{G+1}$)}\\
&= \mathbb{E}\bigl[
      \tfrac{1}{G+1}\sum_{g=1}^{G+1}
      L_{g}(\hat \lambda_p')
    \bigr]\\
&  \le \alpha,\;\text{(By definition of $\hat \lambda_p'$)}
\end{aligned}
\]
\end{proof}


\BBCRC**
\begin{proof}
Let $\{Z_j^g\}_{j\le K,\,g\le G}$ be the calibration replicates, and let $\{Z_j^{G+1}\}_{j\le K}$ denote hypothetical replicates from a future batch $G{+}1$.
Conditioning on calibration batches and using Lemma~\ref{lem:new_and_z_dist},
\[
\mathbb E\!\left[\ell_{\hat\lambda,\beta}(Y_{\mathrm{new}})\,\middle|\,\{Z_j^g\}\right]
~=~\mathbb E\!\left[\bar{\ell}_{\hat\lambda}(G{+}1,j)\,\middle|\,\{Z_j^g\}\right],\quad j=1,\dots,K,
\]
hence
\[
\mathbb E\bigl[\ell_{\hat\lambda,\beta}(Y_{\mathrm{new}})\bigr]
~=~\mathbb E\!\left[\frac{1}{K}\sum_{j=1}^K \bar{\ell}_{\hat\lambda}(G{+}1,j)\right].
\]
By calibration, $\widehat R_{BB}(\hat\lambda)+\tfrac{1}{G}\le\alpha$ implies $\widehat R_{BB}(\hat\lambda)\le \alpha-\tfrac{1}{G}$.
Since each $\bar{\ell}_\lambda(g,j)\in[0,1]$, the augmented $(G{+}1)$‑batch average obeys
\[
\frac{1}{(G+1)K}\!\left(\sum_{g=1}^{G}\sum_{j=1}^{K}\bar{\ell}_{\hat\lambda}(g,j)+\sum_{j=1}^{K}\bar{\ell}_{\hat\lambda}(G{+}1,j)\right)
~\le~ \frac{G}{G+1}\Big(\alpha-\tfrac{1}{G}\Big)\,+\,\frac{1}{G+1}\;\le\;\alpha.
\]
Taking expectations and using exchangeability across batches yields
$\mathbb E[\ell_{\hat\lambda,\beta}(Y_{\mathrm{new}})]\le \alpha$.
\end{proof}

\subsection{Appendix: Proofs for \S\ref{subsec:rbwa-core}}

\RBWAMoments*
\begin{proof}
(a)–(b) For symmetric Dirichlet with total mass \(\kappa=I\eta\),
\(\mathbb E[p_{g,i}]=1/I\),
\(\mathrm{Var}(p_{g,i})=\frac{I-1}{I^2(\kappa+1)}\),
\(\mathrm{Cov}(p_{g,i},p_{g,j})=-\frac{1}{I^2(\kappa+1)}\) for \(i\neq j\).
Hence \(\mathbb E[L_g\mid \ell]=\sum_i \mathbb E[p_{g,i}]\ell_i=\mu\) and
\[
\mathrm{Var}(L_g\mid \ell)
=\sum_i\mathrm{Var}(p_{g,i})\ell_i^2+2\sum_{i<j}\mathrm{Cov}(p_{g,i},p_{g,j})\ell_i\ell_j
=\frac{\mathrm{Var}_{\mathrm{emp}}(\ell)}{\kappa+1}.
\]
Chebyshev’s Inequality and Cantelli’s Inequality yield the displayed tail bounds.

(c) The Dirichlet law has a continuous density on the simplex interior (for parameters \(>0\)).
The linear map \(f(p)=\sum_i p_i\ell_i\) is non-constant when the \(\ell_i\) are not all equal; its level set
\(\{p:f(p)=t\}\) is a codimension-1 slice of the simplex and has Lebesgue measure zero.
Therefore \(\Pr(L_g=t\mid\ell)=0\).
\end{proof}

\RBWACLT*
\begin{proof}
\emph{Conditional moments.} By Theorem~4.1,
\(\mathbb E[L_g\mid \ell]=\mu(\lambda)\) and
\(\mathrm{Var}(L_g\mid \ell)=\mathrm{Var}_{\mathrm{emp}}(\ell(\lambda))/(\kappa+1)
\to \mathrm{Var}_{\mathrm{emp}}(\ell(\lambda))/(\chi+1)\) in probability.

\emph{Triangular-array CLT.} For fixed \(\lambda\), the \(L_g(\lambda)\) are independent, uniformly bounded,
and have asymptotically constant variance. The Lindeberg–Feller CLT applies, giving
\[
\sqrt{G}\,\big(\bar L_G(\lambda)-\mathbb E[L_g(\lambda)]\big)\Rightarrow
\mathcal N\!\left(0,\ \mathrm{Var}_{\mathrm{emp}}(\ell(\lambda))/(\chi+1)\right).
\]
Since \(\mathbb E[L_g(\lambda)]=\mu(\lambda)\), Slutsky’s lemma yields the stated limit.
\end{proof}


\section{Experiment}
\subsection{Appendix: Hallucination Experiment Settings and Configurations}
\label{app:hallucination-settings}

\paragraph{Scope and outputs.}
For each question we generate a response cloud, compute \textbf{Factuality Severity} $= 1-\max_{r \in \text{refs}} \mathrm{BERTScore}\text{-}F1(a,r)$. All runs are seeded and logged to timestamped, self‑describing CSVs: a per‑\emph{answer} file (scores, margins, types, decoding knobs) and a per‑\emph{run} file (dataset/split, sample counts, model/provider, seeds, thresholds, and paths). Together model IDs are normalized to serverless fallbacks to avoid availability regressions.

\paragraph{Benchmarks and roles.}
We evaluate across four core datasets—ASQA (dev), NQ‑Open (validation), HotpotQA (validation), AmbigQA (dev)—plus two ablations that stress decoding entropy and vendor/model choice. Each configuration fixes decoding knobs and the normal/enforced/noise mix, while paraphrasing a canonical gold to reduce aliasing of surface forms.

\begin{table*}[t]
\centering
\small
\setlength{\tabcolsep}{6pt}
\begin{tabular}{l l c c c c c}
\toprule
ID & Benchmark (split) & \#Q & Para $P$ & Ans $N$ & Mix (N/E/Z) & Entropy $\tau$ \\
\midrule
C1 & ASQA (dev) & 60 & 10 & 150 & (.75/.00/.25) & 0.90 \\
C2 & NQ-Open (val) & 60 & 6  & 16  & (.67/.00/.33) & 0.86 \\
C3 & HotpotQA (val) & 60 & 10 & 100 & (.60/.00/.40) & 0.86 \\
C4 & AmbigQA (dev) & 60 & 10 & 150 & (.75/.00/.25) & 0.86 \\
C5 & AmbigQA (dev)\,{\scriptsize (ablation: decoding entropy)} & 40 & 10 & 150 & (.75/.00/.25) & 0.86 \\
C6 & NQ-Open (val)\,{\scriptsize (ablation: vendor/model)} & 60 & 6  & 16  & (.67/.00/.33) & 0.86 \\
\bottomrule
\end{tabular}
\caption{Benchmarks and per-item sampling settings used in the hallucination study. The mix column shows $(\text{normal}/\text{enforced}/\text{noise})$.}
\label{tab:benchmarks-only}
\end{table*}

\begin{table*}[t]
\centering
\small
\setlength{\tabcolsep}{5pt}
\begin{tabular}{l l l c c c l l}
\toprule
ID & Provider & Model & Temp & Top-$p$ & MaxTok & Embed & BERTScore \\
\midrule
C1 & Together & Llama-3.3-70B-Instr.\,Turbo      & 1.3 & 1.0 & 256 & MiniLM-L6-v2 & RoBERTa-large \\
C2 & OpenAI   & gpt-4o-mini                       & 0.1 & 1.0 & 96  & MiniLM-L6-v2 & RoBERTa-large \\
C3 & Together & Mixtral-8{\small x}7B-Instr.\,v0.3 & 1.2 & 1.0 & 256 & MiniLM-L6-v2 & RoBERTa-large \\
C4 & Together & Llama-3.1-8B-Instr.\,Turbo        & 0.7 & 0.9 & 256 & MiniLM-L6-v2 & RoBERTa-large \\
C5 & Together & Llama-3.1-8B-Instr.\,Turbo        & 1.3 & 1.0 & 256 & MiniLM-L6-v2 & RoBERTa-large \\
C6 & Together & Llama-3.1-8B-Instr.\,Turbo        & 0.1 & 1.0 & 96  & MiniLM-L6-v2 & RoBERTa-large \\
\bottomrule
\end{tabular}
\caption{Provider/decoding and measurement settings, linked by \textbf{ID} to Table~\ref{tab:benchmarks-only}.}
\label{tab:providers-only}
\end{table*}

\noindent\footnotesize\emph{Shared knobs:} alias-normalization for Together; \texttt{n\_per\_call}=5; rate-limit $\approx$ 0.8s; severity mix weight logged; seeds: C1=42, C2=7, C3=11, C4=23, C5=23, C6=8.

\paragraph{Example prompts (verbatim, used in data generation).}
We use minimal, auditable prompts. For \emph{paraphrasing} the canonical gold: \emph{System:} “You rewrite text. Output a succinct standalone paraphrase.” \emph{User:} “Paraphrase the following answer in different wording, preserving the exact meaning and factual content. Keep it concise and standalone. Avoid hedging, qualifiers, or extra details. \textbf{Answer:} \{\textit{gold}\}.” For \emph{normal answers}: \emph{System:} “Answer the question with the canonical short answer first; then add at most one brief justification. Be concise.” \emph{User:} \{\textit{question}\}. For \emph{enforced canonical answers}: \emph{System:} “Answer with the canonical short answer first; then a single, concrete supporting detail. Avoid aliasing, avoid hedging, avoid contradictory statements.” \emph{User:} \{\textit{question}\}. (Noise/outlier strings are programmatically injected: gibberish, off‑topic, fabricated citations, prompt‑injection strings, contradictions, emoji floods, and multilingual snippets.)

\subsection{LLM-as-a-Judge: Implementation Details}
\label{app:llmjudge}

We use an LLM as a rubric-based grader for short QA answers, producing a
continuous 0--100 direct-assessment score on the answer head (same head
definition as in the main text). Scores are later normalized and mapped
to severity via $\mathrm{JS}(a)=1-\mathrm{J}(a)/100$.

\subsubsection*{Judge Model and Parameters}
\label{app:llmjudge:models}
\begin{itemize}
  \item \textbf{Primary model:} \texttt{meta-llama/Meta-Llama-3.1-8B-Instruct-Turbo} (Together).
  \item \textbf{Alternative:} \texttt{gpt-4o-mini} (OpenAI).
  \item \textbf{Decoding:} temperature $=0.2$, top\_p $=1.0$, max\_tokens $=200$.
  \item \textbf{Output scale:} single overall score in $[0,100]$ with one decimal.
  \item \textbf{Seed/reproducibility:} fixed RNG seed $=7$ to stabilize few-shot ordering.
  \item \textbf{Rubric weights:} correctness $0.60$, faithfulness $0.20$, completeness
        $0.15$, clarity $0.05$ (weights sum to $1.0$).
\end{itemize}

\subsubsection{What Is Scored}
\label{app:llmjudge:unit}
For each item, the judge sees: (i) the question text $q$, (ii) a compact
reference set $\mathcal{R}_q$ (concise paraphrases of acceptable answers),
and (iii) the \emph{answer head} of the candidate $a$ (first sentence or
\texttt{Final:} field, truncated to $\leq 16$ tokens). This reduces style/length
effects and prevents rationales from masking factual errors.

We use an aspect rubric with fixed weights:
\begin{enumerate}
  \item \textbf{Correctness} (\(0.60\)): factual accuracy and semantic match to $q$ and $\mathcal{R}_q$.
  \item \textbf{Faithfulness} (\(0.20\)): no contradictions or hallucinations relative to $\mathcal{R}_q$.
  \item \textbf{Completeness} (\(0.15\)): covers the essential point(s) asked by $q$.
  \item \textbf{Clarity} (\(0.05\)): concise and unambiguous phrasing.
\end{enumerate}
To reduce score compression, the prompt embeds 3 few-shot \emph{anchors}
(low/mid/high) that illustrate how the 0--100 scale should be used.

\subsubsection{Prompt Template (verbatim)}
\label{app:llmjudge:prompt}
We employ a compact, instruction-only template that requires strict JSON
output. Placeholders in \{\texttt{braces}\} are filled per example.
\begin{verbatim}
You are a meticulous grader for short QA responses.
Score each candidate on a 0–100 scale with ONE decimal place (e.g., 81.7).
Be strict: wrong facts should sharply reduce the score. If no references
are provided, use general knowledge.

Rubric (weights sum to 1.0):
- Correctness ({w_correctness}): 

factual accuracy & semantic match to the question and references.
- Faithfulness ({w_faithfulness}): 

no hallucinations or contradictions vs. references.
- Completeness ({w_completeness}): 

covers essential point(s) requested.
- Clarity ({w_clarity}):

concise, unambiguous wording.

Scoring examples (anchors):
{anchors_block}

Return ONLY a compact JSON object with keys "score" and "subscores";
"subscores" must contain "correctness", "faithfulness", "completeness",
"clarity". Example:
{"score": 88.6, "subscores": {"correctness": 56.0, "faithfulness": 17.6,
 "completeness": 11.0, "clarity": 4.0}}

Now grade this:
Question: {question}
References: {references}   # short list or "N/A"
Candidate: {candidate}     # answer head only
\end{verbatim}

\subsubsection{Anchor Examples (verbatim)}
\label{app:llmjudge:anchors}
We rotate/shuffle three anchors (by a fixed seed) to calibrate the scale.
\begin{verbatim}
- Q: "Who wrote 'Pride and Prejudice'?"
  A: "Pride and Prejudice was written by Jane Austen." -> 95.0 (exact)
- Q: "Who wrote 'Pride and Prejudice'?"
  A: "It’s probably Charles Dickens, I guess." -> 15.0 (incorrect; hedging)
- Q: "Capital of Australia?"
  A: "Canberra. It’s not Sydney or Melbourne." -> 90.0 (precise; disambiguates)
\end{verbatim}

\subsubsection{Normalization and Severity}
\label{app:llmjudge:severity}
The judge produces $\mathrm{J}(a)\in[0,100]$; we compute
$\mathrm{J}_{\mathrm{norm}}(a)=\mathrm{J}(a)/100$ and
\[
\mathrm{JS}(a) \;=\; 1 - \mathrm{J}_{\mathrm{norm}}(a) \;\in\; [0,1].
\]
Low $\mathrm{JS}$ indicates high factual alignment; values near $1$ flag
deviation via error, omission, or contradiction.

\subsection{Appendix (for §4.2): Implementation and Cross-Validation Details}

\paragraph{Data and artifacts (two policy scores).}
For each question we materialize a \emph{self-consistency queue} by sampling a small cloud of responses or judge rationales under fixed decoding knobs. Each element is embedded with a single encoder (unit-norm rows), and we log two separable signals per item $y$: (i) a \emph{label-free} geometry score $E(y)\!\in\![0,1]$ from Gram row energies (Eq.~2.4; aggregated and projected as defined); and (ii) an \emph{offline} factuality-severity flag $q_\beta(y)\!\in\![0,1]$ drawn from references or rubric-graded scores (e.g., $\mathrm{FS}=1-\text{BERTScore-F1}$, or $\mathrm{JS}=1-J/100$ from an LLM judge). When using the judge as an \emph{online} policy, we also record the normalized judge score $J_{\mathrm{norm}}(y)=J(y)/100\!\in\![0,1]$ along with subscores (correctness, faithfulness, completeness, clarity) from a fixed rubric/anchor prompt. The responses-level CSV stores per-item $E$ summaries (e.g., queue median), $q_\beta$, and (optionally) $J_{\mathrm{norm}}$ with subscores; a parameters-level CSV records dataset/model/seed/knobs for reproducibility. 

\begin{table}[h]
\centering
\small
\setlength{\tabcolsep}{5pt}
\begin{tabular}{p{0.30\textwidth} p{0.22\textwidth} p{0.40\textwidth}}
\hline
\textbf{Policy score $Q$ (definition)} & \textbf{Action (gate)} & \textbf{Key strengths}\\
\hline
\textbf{$Q_E$}: $Q_E(y)=E(y)$ \emph{(Eq.~2.4)} 
& Accept on high consensus 
& Gram geometry; centroid coupling; statistically traceable \\
\textbf{$Q_G$}: $Q_G(y)=J_{\mathrm{norm}}(y)$ \emph{(\S4.1)} 
& Accept on high judge score 
& Direct control of judge pipeline; reliability gains; rubric‑aligned \\
\hline
\end{tabular}
\caption{Compact summary of the two instantiations of the $Q$‑policy.}
\label{tab:qpolicy}
\end{table}

We instantiate the policy-first loss of Eq.~(4.2) with \emph{either} geometry or judge as $Q$:
\[
\mathcal{L}_E(y,\lambda)=\mathbf{1}\{E(y)\ge\lambda\}\,q_\beta(y),
\qquad
\mathcal{L}_G(y,\lambda)=\mathbf{1}\{J_{\mathrm{norm}}(y)\ge\lambda\}\,q_\beta(y).
\]
Both are \emph{monotone in $\lambda$}: as $\lambda$ increases, the system accepts fewer items, so the loss cannot increase. Interpretation is identical: we incur loss only when we \emph{act} (accept) and the item is \emph{bad} (unfactual according to $q_\beta$). Normalizing $E$ and $J_{\mathrm{norm}}$ to $[0,1]$ renders $\lambda$ dimensionless and comparable across folds/days. For robustness we may Huberize/clip $q_\beta$ to dampen tails; the actuator never reads $q_\beta$ online. (Judge scoring, normalization, and rubric details appear in §5.2; severity $\mathrm{JS}$ in Eq.~(4.2).)

To assess generalization at the “new question” granularity, we adopt grouped CV with disjoint question blocks per fold. On calibration folds we treat pairs $(Q_i, q_{\beta,i})$ as exchangeable—where $Q_i\in\{E_i,\,J_{\mathrm{norm},i}\}$ depending on the policy—and apply CRC to select a single global threshold \emph{per policy}, $\hat\lambda_E(\alpha)$ and $\hat\lambda_G(\alpha)$. We use two compute-aware calibrators: BB–CRC (batched bootstrap reuse) and RBWA–CRC (randomized simplex weights). Both operate on the same bounded, monotone losses and differ only in how they stabilize the empirical risk curve. In BB–CRC, the smallest right-open threshold satisfying the bias-corrected constraint
\[
\frac{1}{(G+1)K}\sum_{g=1}^{G}\sum_{j=1}^{K}\mathcal{L}\!\big(Z^{(g)}_{j},\lambda\big)\;+\;\frac{1}{G+1}\;\le\;\alpha
\]
is returned as $\hat\lambda$; RBWA–CRC replaces per-replicate sums by per-batch randomized weighted averages $\sum_{i}p_{g,i}\,\mathcal{L}(Y_{g,i},\lambda)$ with $p_g\!\sim\!\mathrm{Dirichlet}(\eta\mathbf{1})$, yielding an unbiased smoother with a one-knob variance dial. The calibrator only sees $(Q, q_\beta)$ pairs; no labels or logits are needed at deployment time. (Formal CRC details and the compute-aware variants are in §3.) 

We calibrate \emph{one threshold per policy} on a scalar, deployment-time score $Q$: $Q_E\!=\!E$ (label-free Gram energy) or $Q_G\!=\!J_{\mathrm{norm}}$ (LLM-as-Judge). Both share the same policy-first actuator $\mathbf{1}\{Q\ge\hat\lambda\}$ and the same calibration-only severity $q_\beta$, yielding a single-knob control that transfers unchanged to production. Geometry offers statistical traceability and interpretability via auditable batch consensus, while the judge policy directly steers judge-driven pipelines and can improve their reliability under a frozen rubric—all without requiring ground truth at runtime. (See §3 for guarantees and §5.2 for judge implementation.)

\section{Baseline Implementation Details}
\label{app:baselining}

Each benchmark dataframe row contains at least: 
\texttt{severity\_f1} (unfiltered factuality severity; lower is better) and a judge score (\texttt{LLMJUDGE\_score\_norm}\(\in[0,1]\), or \texttt{LLMJUDGE\_score}/100). 
Other columns (question, model, provider, etc.) are logged but unused by the actuator. FS is defined as 
\(FS=1-\text{BERTScore-F1}(\text{answer head}, \text{reference})\) with head length \(\le 16\) tokens.

We use the policy‑first loss and gate (Eq.~(4.1), (5.3)): 
\[
L(y,\lambda)=a_\lambda(Q(y))\cdot q_\beta(y),\qquad a_\lambda(u)=\mathbf{1}\{u\ge\lambda\}.
\]
At deployment we compute \(Q(y)\) and apply \(a_{\hat\lambda}(Q(y))\); \(q_\beta\) is calibration‑only. This yields a bound on acted‑while‑bad intensity \(E[L(Y_{\text{new}},\hat\lambda)]\le\alpha\) in finite samples for the CRC modes.

\emph{G\text{-}Eval\text{-}N} (G‑Eval Naive): \(Q=J_{\text{norm}}\). We evaluate a fixed list of thresholds \(\lambda\in\{0.99,0.95,0.90,0.85,0.80\}\). For fairness to CRC plots, we display the corresponding cells beside \(\alpha\in\{0.01,\dots,0.20\}\) (visual alignment only; no guarantees).  
\emph{G\text{-}Eval\text{-}CRC} (G‑Eval Risk Control): \(Q=J_{\text{norm}}\). We calibrate a single \(\hat\lambda(\alpha)\) per setting using \textbf{BB‑CRC} (Alg.~4.1, Thm.~4.2) and deploy the same hard gate.   
\emph{Grand\text{-}CRC} (Grand Risk Control): \(Q=E\) (centered‑Gram energy, \([0,1]\) after normalization). We calibrate \(\hat\lambda(\alpha)\) using \textbf{BB‑CRC} (Alg.~4.2, Thm.~4.3) with Dirichlet weights at fixed precision, then deploy the same hard gate. 

Given a dataframe \(D\) with columns \(\{\texttt{severity\_f1},\,\texttt{LLMJUDGE\_score\_norm}\}\):  
\begin{enumerate}[leftmargin=1.25em,itemsep=2pt,topsep=2pt]
  \item For each threshold \(\lambda\) (fixed in G\text{-}Eval\text{-}N or calibrated \(\hat\lambda(\alpha)\) in CRC modes), define shipped mask \(M=\{J_{\text{norm}}\ge\lambda\}\) (for judge modes) or \(M=\{E\ge\lambda\}\) (for Grand\text{-}RC).  
  \item Compute \(\mathrm{FS}_{\text{shipped}}=\text{mean}(\texttt{severity\_f1}[M])\), \(\mathrm{FS}_{\text{unshipped}}=\text{mean}(\texttt{severity\_f1}[\neg M])\).
  \item Report \(\text{FS‐reduction}(\%)=100\cdot\bigl(1-\mathrm{FS}_{\text{shipped}}/\mathrm{FS}_{\text{unshipped}}\bigr)\) and the acceptance rate \(|M|/|D|\). 
\end{enumerate}
When either group is empty we emit \texttt{NaN} and exclude from the aggregate.

\section{More Experiment Results}

The six panels span four QA regimes—AmbigQA (aliases/answer sets), NQ‑Open (single‑hop), HotpotQA (multi‑hop), and ASQA (under‑specification)—plus two controlled ablations designed to probe robustness: a high‑entropy decoding setting (AMBIGQA‑ENT) and a vendor/model swap on NQ‑Open. The short codes in Table~\ref{tab:dataset-map-single} bind figure titles to the exact CSV artifacts used for reproducibility and align with our FS-on-answer‑head measurement. 

\begin{table}[H]
  \centering
  \caption{\textbf{Benchmark mapping used in the six-panel comparisons.} Short codes are the compact labels used in figure titles. For JudgeQ CSVs, the same names appear with the suffix \texttt{\_\_judged}.}
  \label{tab:dataset-map-single}
  \small
  \begin{tabular}{@{}c l p{0.62\linewidth}@{}}
    \toprule
    \textbf{Panel} & \textbf{Short code} & \textbf{CSV \texttt{dataset\_name}} \\
    \midrule
    1 & AMBIGQA-ENT  & \texttt{ambigqa\_\_llama8b\_\_hiT\_\_ablation\_entropy\_\_ns40\_responses} \\
    2 & AMBIGQA      & \texttt{ambigqa\_\_llama8b\_\_midT\_\_ns60\_responses} \\
    3 & ASQA         & \texttt{asqa\_\_llama70b\_\_hiT\_\_ns60\_responses} \\
    4 & HOTPOTQA     & \texttt{hotpot\_\_mixtral8x7b\_\_hiT\_noise40\_\_ns60\_responses} \\
    5 & NQ-OPEN      & \texttt{nq\_\_gpt4omini\_\_loT\_light\_\_ns60\_responses} \\
    6 & NQ-OPEN-VEND & \texttt{nq\_\_llama8b\_\_loT\_\_ablation\_vendor\_\_ns60\_responses} \\
    \bottomrule
  \end{tabular}
\end{table}

\noindent\textbf{Reading the six‑panel plots.} For each budget $\alpha\!\in\!\{0.01,\dots,0.20\}$, the (calibrated) actuator $a_{\hat\lambda}$ partitions candidates into \emph{Unshipped} and \emph{Shipped}; bars report mean $\FS$ for each group (lower is better). Panel titles use the short codes above; legends are placed outside the axes to preserve title visibility. The gate and metric are fixed; only the policy score changes between the next two figures. 

\begin{figure*}[t]
  \centering
  \includegraphics[width=\textwidth]{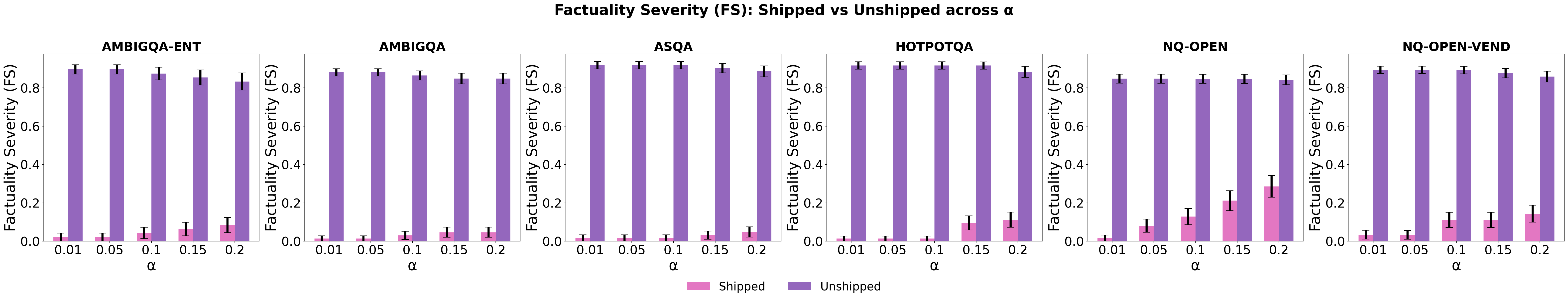}
  \caption{\textbf{GramQ: \FS{} across \(\alpha\) (six benchmarks).}
  Bars compare \textit{Shipped} vs \textit{Unshipped} \FS{} for each \(\alpha\) (lower is better).
  Panel titles use compact, uniform short codes: \texttt{AMBIGQA-ENT}, \texttt{AMBIGQA}, \texttt{ASQA}, \texttt{HOTPOTQA}, \texttt{NQ-OPEN}, \texttt{NQ-OPEN-VEND}.
  The legend is moved outside the axes in the exported figure to avoid any overlap with titles.}
  \label{fig:gramq-comparison}
\end{figure*}

\noindent\textbf{$Q_E$ (Gram geometry) results.} Across all six panels, the shipped sets exhibit a large $\FS$ drop for every $\alpha$, including the entropy stress test (panel~1) and the vendor swap (panel~6), indicating stability to decoding noise and provider/model variation. Aggregated over panels, the geometry policy sustains high reductions as $\alpha$ grows (e.g., $97.9\%\!\to\!86.0\%$ from $\alpha{=}0.01$ to $0.20$ in Table~\ref{tab:fs-reduction-hard}), consistent with consensus‑seeking acceptance in centered Gram space. 

\begin{figure*}[t]
  \centering
  \includegraphics[width=\textwidth]{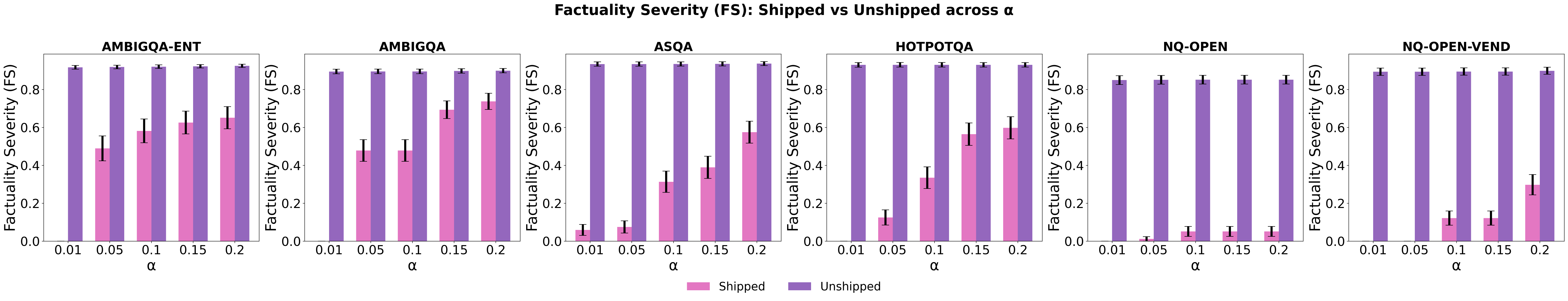}
  \caption{\textbf{JudgeQ: \FS{} across \(\alpha\) (six benchmarks).}
  Same layout as Fig.~\ref{fig:gramq-comparison}, with an LLM judge as the policy signal.
  Bars show \textit{Shipped} vs \textit{Unshipped} \FS{}; panel titles use the same short codes.
  The legend is positioned outside the plotting area to prevent obstruction of panel titles.}
  \label{fig:judgeq-comparison}
\end{figure*}

\noindent\textbf{Switching the policy to $Q_J$.} We now hold the actuator and calibration protocol fixed and replace the online score with a rubric‑normalized judge ($Q_J$), so differences isolate the \emph{policy signal} rather than changes in gating or measurement. 

\noindent\textbf{$Q_J$ (LLM‑as‑judge) results.} The judge policy also lifts factuality across panels but shows a stronger dependence on $\alpha$ (and mild task‑to‑task variation), which is compatible with rubric/style sensitivity; CRC turns its threshold into a measurable one‑knob control with finite‑sample validity. In the compact summary (Table~\ref{tab:fs-reduction-hard}), the FS reduction moves from $98.9\%$ at $\alpha{=}0.01$ to $46.5\%$ at $\alpha{=}0.20$. 

\medskip
\noindent\textbf{Compact cross‑policy summary.} Table~\ref{tab:fs-reduction-hard} aggregates the six‑panel plots by reporting $(\FS_{\text{unshipped}}, \FS_{\text{shipped}})$ and the percentage reduction at each $\alpha$ for both policies. Geometry maintains uniformly lower shipped severity across budgets, while the judge policy is competitive at tight budgets and provides an interpretable baseline for a rubric‑driven pipeline under the same actuator. 

\begin{table}[H]
  \centering
  \caption{\textbf{Compact FS reduction across $\alpha$.} For each policy (GramQ, JudgeQ) and $\alpha$,
  we report $\,\mathrm{FS}_{\text{unshipped}}$, $\,\mathrm{FS}_{\text{shipped}}$, and the percentage reduction from Unshipped to Shipped (higher is better).
  FS follows the paper’s definition $FS=1-\text{BERTScore-F1}(\text{answer head},\text{reference})$.}
  \label{tab:fs-reduction-hard}
  \small
  \begin{tabular}{@{}l c r r r@{}}
    \toprule
    \textbf{Policy} & $\boldsymbol{\alpha}$ & \textbf{$\mathrm{FS}_{\text{unshipped}}$} & \textbf{$\mathrm{FS}_{\text{shipped}}$} & \textbf{FS reduction (\%)} \\
    \midrule
    GramQ  & 0.01 & 0.892 & 0.019 & 97.9 \\
    GramQ  & 0.05 & 0.892 & 0.030 & 96.7 \\
    GramQ  & 0.10 & 0.885 & 0.057 & 93.6 \\
    GramQ  & 0.15 & 0.874 & 0.093 & 89.4 \\
    GramQ  & 0.20 & 0.859 & 0.120 & 86.0 \\
    JudgeQ & 0.01 & 0.903 & 0.010 & 98.9 \\
    JudgeQ & 0.05 & 0.904 & 0.197 & 78.3 \\
    JudgeQ & 0.10 & 0.905 & 0.314 & 65.3 \\
    JudgeQ & 0.15 & 0.905 & 0.408 & 55.0 \\
    JudgeQ & 0.20 & 0.907 & 0.485 & 46.5 \\
    \bottomrule
  \end{tabular}
\end{table}

\medskip
\noindent\textbf{Calibration quality and stability.} Table~\ref{tab:calibration-hard} summarizes empirical risk and threshold stability for CRC variants. All three keep acted‑while‑bad risk near or below the budget, while RBWA tracks $\alpha$ most closely and reduces the standard error of the calibrated threshold: e.g., at $\alpha{=}0.15$ the empirical risks are $0.039$ (BB‑CRC), $0.039$ (CRC), $0.138$ (RBWA) with SE$(\hat\lambda)$ of $7.46\!\times\!10^{-4}$, $5.79\!\times\!10^{-4}$, and $2.89\!\times\!10^{-4}$, respectively; at $\alpha{=}0.05$, SE$(\hat\lambda)$ falls from $\approx 1.3\!\times\!10^{-3}$ (BB‑CRC/CRC) to $4.61\!\times\!10^{-4}$ (RBWA). These patterns match the smoothing/anti‑concentration analysis for RBWA and the bootstrap reuse in BB‑CRC. 

\begin{table}[H]
  \centering
  \caption{\textbf{Calibration summary across $\alpha$.} Empirical risk and its standard error (SE) for each method, together with the \textbf{Stability} (SE of $\lambda$).}
  \label{tab:calibration-hard}
  \small
  \begin{tabular}{@{}l c r r r@{}}
    \toprule
    \textbf{Method} & $\boldsymbol{\alpha}$ & \textbf{Empirical risk} & \textbf{Risk SE} & \textbf{Stability (SE of $\lambda$)} \\
    \midrule
    BB-CRC & 0.01 & 0.012 & 0.012 & 0.001299 \\
    BB-CRC & 0.05 & 0.012 & 0.012 & 0.001299 \\
    BB-CRC & 0.10 & 0.026 & 0.018 & 0.000375 \\
    BB-CRC & 0.15 & 0.039 & 0.022 & 0.000746 \\
    BB-CRC & 0.20 & 0.062 & 0.030 & 0.000194 \\
    CRC    & 0.01 & 0.012 & 0.012 & 0.001321 \\
    CRC    & 0.05 & 0.012 & 0.012 & 0.001321 \\
    CRC    & 0.10 & 0.026 & 0.018 & 0.000419 \\
    CRC    & 0.15 & 0.039 & 0.022 & 0.000579 \\
    CRC    & 0.20 & 0.062 & 0.030 & 0.000248 \\
    RBWA   & 0.01 & 0.000 & 0.000 & 0.000000 \\
    RBWA   & 0.05 & 0.026 & 0.018 & 0.000461 \\
    RBWA   & 0.10 & 0.074 & 0.031 & 0.000189 \\
    RBWA   & 0.15 & 0.138 & 0.042 & 0.000289 \\
    RBWA   & 0.20 & 0.171 & 0.045 & 0.000236 \\
    \bottomrule
  \end{tabular}
\end{table}

\noindent\textbf{Takeaway.} Across heterogeneous QA regimes and stress tests, a single calibrated gate converts variability into validity; $Q_E$ provides a provider‑agnostic consensus signal with uniform gains, while $Q_J$ with CRC yields a deployable judge pipeline with a measurable, risk‑tracked knob.

\end{document}